\documentclass[
    reprint,
    aps,
    prx,
    twocolumn,
    footinbib
]{revtex4-2}

\usepackage[
    colorlinks=true,
    linkcolor=blue,
    citecolor=red,
    plainpages=false
]{hyperref}

\usepackage{graphicx}
\usepackage{amsfonts,amssymb, amsmath, amsthm}
\usepackage{bm}
\usepackage{xcolor}
\usepackage{mathtools}
\usepackage{physics}
\usepackage{algorithm}
\usepackage[noend]{algpseudocode}
\usepackage{enumitem}
\usepackage{tabularray}
\UseTblrLibrary{booktabs}
\usepackage{tikz}
\usepackage{array,booktabs,makecell}
\newcolumntype{C}{>{\raggedright\arraybackslash}X}
\SetTblrStyle{note}{halign=l}      

\newtheorem{theorem}{Theorem}
\newtheorem{lemma}[theorem]{Lemma}
\newtheorem{claim}[theorem]{Claim}
\newtheorem{corollary}[theorem]{Corollary}

\DeclareMathOperator{\poly}{poly}
\DeclareMathOperator{\polylog}{polylog}
\DeclareMathOperator{\vecop}{vec}
\let\abs\relax  
\DeclarePairedDelimiter{\abs}{\vert}{\rvert}
\DeclarePairedDelimiter{\opnorm}{\lVert}{\rVert}
\DeclarePairedDelimiter{\ceil}{\lceil}{\rceil}
\DeclarePairedDelimiterXPP\tracenorm[1]{}{\lVert}{\rVert}{_1}{#1}
\DeclarePairedDelimiterXPP\diamondnorm[1]{}{\lVert}{\rVert}{_\diamond}{#1}
\DeclarePairedDelimiterXPP\lnorm[1]{}{\lVert}{\rVert}{_2}{#1}
\DeclarePairedDelimiterX\innerp[3]{\langle}{\rangle}{#1\,\delimsize\vert\,\mathopen{}#2\,\delimsize\vert\,\mathopen{}#3}%

\providecommand\given{}
\newcommand\SetSymbol[1][]{%
\nonscript\:#1\vert
\allowbreak
\nonscript\:
\mathopen{}}
\DeclarePairedDelimiterX\Set[1]\{\}{%
\renewcommand\given{\SetSymbol[\delimsize]}
#1
}

\newcommand{\eg}{e.g.,\@}
\newcommand{\ie}{i.e.,\@}

\begin{document}

\title{Probabilistic quantum algorithm for Lyapunov equations and matrix inversion}

\author{Marcello Benedetti}
\email{marcello.benedetti@quantinuum.com}
\affiliation{Quantinuum, Partnership House, Carlisle Place, London SW1P 1BX, United Kingdom} 

\author{Ansis Rosmanis}
\email{ansis.rosmanis@quantinuum.com}
\affiliation{Quantinuum, Partnership House, Carlisle Place, London SW1P 1BX, United Kingdom} 

\author{Matthias Rosenkranz}
\email{matthias.rosenkranz@quantinuum.com}
\affiliation{Quantinuum, Partnership House, Carlisle Place, London SW1P 1BX, United Kingdom} 

\date{April 17, 2026}

\begin{abstract}
We present a probabilistic quantum algorithm for preparing mixed states
which, in expectation, are proportional to the solutions of Lyapunov
equations---linear matrix equations ubiquitous in the analysis of classical
and quantum dynamical systems. Building on previous results by Zhang \textit{et al.}, 
\href{https://arxiv.org/abs/2304.04526}{arXiv:2304.04526}, at each step the
algorithm can (i) return the current state, (ii) apply a trace nonincreasing completely positive map, or (iii) restart. We introduce a deterministic stopping rule, which leads to an efficient algorithm with a bounded expected number of calls to oracles representing the two input matrices of the Lyapunov equations. We also consider preparing a mixed state that approximates the normalized inverse of a
positive definite matrix $A$. In its most general form, the algorithm generates mixed states, which approximate matrix-valued weighted sums and integrals. It can be shown that block encodings and states yield two incomparable computational resources even when they represent the same piece of data. While block encodings of functions have received much attention in the literature, our work takes a step toward the less explored problem of encoding functions into mixed states.
\end{abstract}

\maketitle

\section{Introduction}

We present a probabilistic quantum
algorithm that solves several linear matrix equations under certain conditions in the following sense. Given an $N\times N$ normal matrix $A$ and $N\times N$ positive semidefinite matrix $B$, in expectation the algorithm generates mixed states approximately proportional to solutions $X$ of the
\emph{Lyapunov equations}
\begin{align}
    AXA^\dagger - X + B
    &=
    0
    &&\text{(discrete time)}\label{eq:lyapunov_discrete}\\
    AX + XA^\dagger + B
    &=
    0
    &&\text{(continuous time)}\label{eq:lyapunov_continuous}
    \intertext{with bounded expected stopping time and guarantees on the approximation error. We also discuss an application of
    the algorithm to finding solutions of}
    AX &= I
    &&\text{(matrix inversion)}\label{eq:matrix_inversion}
\end{align}
with error $\epsilon$ in trace distance and expected stopping time $\ceil{ \kappa\ln(1/\epsilon)}+1$ for positive definite $A$ with condition number $\kappa$ and $I$ the identity.

Lyapunov equations are foundational to control and systems theory, fields in engineering concerned with controlling and predicting dynamical systems (see, \eg\ Refs.~\cite{antoulas2005approximation,gajicLyapunovMatrixEquation2008} and references therein). Consider a continuous-time, linear dynamical system $\dot{\bm{x}} = A\bm{x}$ with $\bm{x}$ the system state. More general dynamical systems include driving terms and outputs. If this system is stable [here, $\bm{x}(t)\rightarrow 0$ for $t\rightarrow\infty$], a solution to the Lyapunov equation exists for positive semidefinite $B$. Then functions of the form $\bm{x}^T X \bm{x}$ characterize energy transfer and the approach to equilibrium. Lyapunov equations also determine the time evolution of covariance matrices of linear open quantum systems~\cite{bravyiClassicalSimulationDissipative2011,purkayasthaLyapunovEquationOpen2022}. Furthermore, their solutions appear in the calculation of the Fr{\'e}chet derivatives of the matrix square root function~\cite{Delmoral_2018}, and in quantum information as the symmetric logarithmic derivative used in the quantum Fisher information matrix and Bures metric~\cite{Paris_2009, wojtowiczQuantumFisherInformation2025a}.

A conceptually simple way to solve either of the Lyapunov equations is to map it to a
system of $N^2$ linear equations $\mathcal{A}\bm{x} = \bm{b}$ and find an
approximate solution $\tilde{\bm{x}}$ with $\lnorm{\tilde{\bm{x}} - \bm{x}}
< \epsilon$. For general or sparse $\mathcal{A}$, this is impractical for
large $N$ owing to a $\poly(N)$ runtime of any classical solver. Specialized solvers, such as the
Bartels-Stewart algorithm~\cite{bartelsAlgorithm432C21972}, or convex optimization~\cite{boydLinearMatrixInequalities1994} polynomially reduce the complexity but remain restricted to moderate $N$. Approximate methods for larger-scale systems include singular value decomposition-based methods and Krylov methods~\cite{antoulas2005approximation,simonciniComputationalMethodsLinear2016}. Alternatively, one can use a quantum linear system solver to
obtain a state $\ket{\tilde{\bm{x}}}$ with $\lnorm{\ket{\tilde{\bm{x}}} -
\ket{\bm{x}}} < \epsilon$ for $\ket{\bm{x}} \propto \sum_k x_k
\ket{k}$~\cite{sunSolvingLyapunovEquation2017}. Given an oracle that block encodes $\mathcal{A}$ with $\opnorm{\mathcal{A}} = 1$ and condition number $\kappa$, and an oracle preparing $\ket{\bm{b}}$, the best currently known quantum linear system solvers achieve query complexity $\mathcal{O}[\kappa \log(1/\epsilon)]$~\cite{costaOptimalScalingQuantum2022,dalzellShortcutOptimalQuantum2024}. Thus, if one can construct the block encoding and state-preparation circuits with $\polylog(N)$ gates, preparing the state $\ket{\tilde{\bm{x}}}$ has runtime $\mathcal{O}[\kappa\log(1/\epsilon)\polylog(N)]$. This approach has the potential for a large scaling advantage in $N$ in applications that only require estimating few expectation values rather than reading out the full state $\tilde{\bm{x}}$. For a recent
review of quantum linear system solvers, see Ref.~\cite{Morales_2025}. A quantum algorithm for the continuous-time Lyapunov equation was proposed in Ref.~\cite{claytonDifferentiableQuantumComputing2024}. This equation is also a special case of the algebraic Riccati equation, for which a quantum algorithm is given in Ref.~\cite{Liu_2025}, and of the Sylvester equation, for which a quantum algorithm appeared very recently~\cite{Somma_2025}. These approaches output a block encoding of the subnormalized solution matrix.

\begin{table*}[tb]
    \centering
    \begin{tblr}{colspec={lllll}}
        \toprule
        \SetCell[r=2]{h} \textbf{Problem}
        & \SetCell[r=2]{h} \textbf{Assumption}
        & \SetCell[c=3]{c} \textbf{Parameters}
        &
        &
        \\
        \cmidrule{3-5}
        &
        & $T$
        & $M$ [for $\mathcal{E}(\cdot)=M\,\cdot \, M^\dagger$]
        & $\rho_0$
        \\
        \midrule
        Discrete-time Lyapunov
        & $\opnorm{A} < 1$
        & $\ceil*{\frac{\ln(1/\epsilon)}{2\ln(1/\opnorm{A})}}$
        & $A$
        & $B$
        \\
        Continuous-time Lyapunov
        & $\Re[\lambda(A)] < 0$
        & $\ceil*{ \frac{\ln(1/\epsilon_2)}{2\epsilon_1} \frac{ \opnorm{A}\abs*{\min \Set{ \Re[\lambda(A)] }}}{ \abs*{\max \Set{ \Re[\lambda(A)] }}^2 } }$
        & $e^{\frac{\epsilon_1}{\opnorm{A}}\frac{\max\Set{\Re[\lambda(A)]}}{\min\Set{\Re[\lambda(A)]}} A}$
        & $B$
        \\
        Matrix inversion I
        & $A \succ 0$
        & $\ceil*{ \kappa \ln (\frac{1}{ \epsilon}) }$
        & $\sqrt{I -\frac{A}{\opnorm{A}}}$
        & $\frac{I}{N}$
        \\
        Matrix inversion II
        & $A \succ 0$
        & $\ceil*{ \frac{\kappa^2}{\epsilon} \ln(\frac{2}{\epsilon}) }$
        & $e^{-\frac{\epsilon}{2\kappa \opnorm{A}} A}$
        & $\frac{I}{N}$
        \\
        \bottomrule
    \end{tblr}
    \caption{\textbf{Overview of results.} Algorithm~\ref{alg:cap} achieves trace distance $\epsilon$ or $\epsilon_1+\epsilon_2$ between the expected output state and normalized solutions $X/\tr(X)$ of the listed problems using the stated parameters. We set parameter $r_k = \frac{1}{T+1-k}$ for all $k \in [T+1]$ and assume $B\succeq 0$, $\tr(B)=1$, $[A, A^\dagger] = 0$. On average the algorithm queries a block encoding of $M$ and a state-preparation circuit for $\rho_0$ at most $T+1$ times.
    $\lambda(A)$ is the set the eigenvalues of $A$ and $\kappa$ the condition number of $A$.}
    \label{tab:summary}
\end{table*}

In contrast to previous work, here we want to produce quantum states proportional to the solution of the matrix equation, \ie\ $\rho \propto X$. This is interesting because, in general, a state and a block encoding for the same piece of data provide incomparable computational resources. We prove this in Appendix~\ref{app:incompatibility} by adapting a result from Ref.~\cite{Somma_2025}. Our algorithm builds on the work by~\citet{zhang2023dissipative}. Their
algorithm generates states via a stochastic process $\Set{\rho_t}$, which depends on the outcomes of both a classical coin flip and a quantum measurement of an ancilla register. Coin flips determine whether the process stops, while measurements determine whether the desired completely positive (CP) map $\mathcal{E}$ acting on the main register is correctly implemented. The stopped process generates states
proportional to $\sum_{k=0}^\infty c_k \mathcal{E}^k(\frac{I}{N})$ in expectation,
where $c_k$ depends on the coin-flip probabilities $\Set{r_j}_{j=0}^k$ for all $k$ and $\mathcal{E}^k$ means applying $\mathcal{E}$ $k$ times. For a
suitable choice of $\mathcal{E}$ and $\Set{r_k}$, this process returns an approximate
quantum Gibbs state in expectation. 

We modify the algorithm to generate states proportional to
$\sum_{k=0}^T c_k \mathcal{E}^k(\rho_0)$ with an arbitrary initial state
$\rho_0$. The main difference to Ref.~\cite{zhang2023dissipative} is the newly introduced parameter $T$ and different choices of $\mathcal{E}$, $\Set{c_k}$, and $\rho_0$. These modifications allow us to
generate states satisfying
$\frac{1}{2}\tracenorm[\Big]{\mathbb{E}[\rho_{}] - \frac{X}{\tr(X)}} \leq
\epsilon$ for solutions $X$ of either the Lyapunov equations or matrix
inversion. For Lyapunov equations we assume $\tr(B)=1$ and show that, in expectation, our
algorithm requires at most $T+1$ calls to a block encoding of either $A$ or $e^{\Delta A}$ for small $\Delta$ and
a state-preparation circuit for $\rho_0=B$. The same result holds for matrix inversion where we use $\rho_0=I/N$ instead. Once the state is prepared, it can be used for the estimation of expectations values $\frac{\tr(X O)}{\tr(X)}$, overlaps $\frac{\innerp{\psi}{X}{\phi}}{\tr(X)}$, or the normalization $\tr(X)$. Table~\ref{tab:summary} summarizes the main results. Approximate implementations of the map $\mathcal{E}$ lead to an additive error discussed in the main text.

For matrix inversion our algorithm appears to match the optimal query complexity bound in $\kappa, \epsilon$ for quantum linear system solvers~\cite{costaDiscreteAdiabaticQuantum2025,orsucciSolvingClassesPositivedefinite2021,mori2026sparsity}. However, it is worthwhile to emphasize the subtle difference: for quantum linear systems we want to output a pure state $\ket{\bm{x}} \propto A^{-1} \ket{\bm{b}}$, for matrix inversion we want to output a mixed state $\rho_{} \propto A^{-1}$. While the former problem is known to be BQP complete~\cite{Harrow_2009}, we are not aware of the complexity class of the latter. Finally, we compare quantum algorithms for solving the continuous-time Lyapunov equation in Appendix~\ref{app:comparison_of_algos}. Bearing in mind the distinction in output formats (block encodings versus states), the query complexity analysis indicates that our algorithm remains competitive while relying on potentially simpler primitives.

\section{Background}

We begin with a description of the probabilistic quantum algorithm by~\citet{zhang2023dissipative}. Throughout this paper $\tracenorm{\cdot}$
denotes the trace norm, $\opnorm{\cdot}$ the operator norm, $\lambda(A)$ the set of
eigenvalues of $A$, and for integer $i>0$, $[i]\coloneq \Set{0, 1, \dots, i-1}$.

The algorithm starts with an initial quantum state $\rho_0$ and iteration number
$i=0$. At each iteration $i$ there are up to two probabilistic events. The first
event is given by a Bernoulli random variable $b \sim \mathcal{B}(r_i)$ where $0
\leq r_i \leq 1$ is the probability of observing the $b = 1$ outcome. When $b =
1$ is observed, the algorithm stops and returns the current state $\rho_i$. When
$b = 0$, the algorithm applies a channel to the current state:
\begin{align}
\label{eq:cptp_map_1}
    \Phi(\rho_i)
    =
    \dyad{0} \otimes \mathcal{F}(\rho_i) + \dyad{1} \otimes \mathcal{E} (\rho_i),
\end{align}
where $\mathcal{E},
\mathcal{F}$ are two trace nonincreasing CP maps that form a quantum
instrument. 
The channel $\Phi$ can be interpreted as either the production of a desired state
$\frac{\mathcal{E}(\rho_i)}{\tr[\mathcal{E}(\rho_i)]}$ with probability
$p_{i+1}=\tr[\mathcal{E}(\rho_i)]$, or the production of an undesired state
$\frac{\mathcal{F}(\rho_i)}{1 - \tr[\mathcal{E}(\rho_i)]}$ with probability $1 - p_{i+1}$. Therefore, a
second probabilistic event is triggered by measuring the ancilla qubit in
the computational basis. This is another Bernoulli random variable $a \sim \mathcal{B}(p_{i+1})$. A measurement outcome $a=1$ indicates that the desired
state has been produced. In this case we
increment the iteration number to $i+1$ and proceed with a recursive call of the
iteration just described. If $a = 0$, we introduce a fresh copy of the initial
state $\rho_0$ and restart the algorithm at iteration $i=0$.

We briefly mention a model for an exact implementation of the CP maps and discuss its approximate implementation later. For our purposes it is
sufficient to focus on the map $\mathcal{E}$ and assume a single, square Kraus operator $M$ such that 
\begin{equation}
    \mathcal{E}(\cdot) = M\,\cdot \, M^\dagger.
\end{equation}
We suppose that $\mathcal{E}$ is implemented by
block encoding $M$. That is, we have an $(n+m)$-qubit unitary $U_M$ such that $M=\alpha(\bra{1}^{\otimes m} \otimes I^{\otimes n}) U_M (\ket{1}^{\otimes m} \otimes I^{\otimes n})$ for $\alpha\geq\opnorm{M}$ and $n$, $m$ the number of system qubits and ancillas, respectively. Projecting on the $\ket{1}^{\otimes m}$ subspace by postselecting on the
all-ones outcome of the ancillas implements the map $\mathcal{E}$. Block-encoding constructions are thoroughly discussed in Refs.~\cite{Low_2019, gilyenQuantumSingularValue2019} and references therein. To keep notation light, in Eq.~\eqref{eq:cptp_map_1} we assume postselection on a single ancilla. This is without loss of generality, since we can always add an extra qubit to the block encoding and set it to $\ket{1}$ if the state of the other ancillas is $\ket{1}^{\otimes m}$.

\begin{figure}[t]
\begin{algorithm}[H]
\caption{The probabilistic quantum algorithm. Executed with \mbox{\textsc{SampleState}$(\rho_0, 0$)}. Returns the quantum state given in Lemma~\ref{lemma:expected_state} with stopping time in Lemma~\ref{lemma:stopping_time}.}
\label{alg:cap}
\begin{algorithmic}
\Require State-preparation circuit for $\rho_0$, trace nonincreasing CP map $\mathcal{E}$, parameters $T$, $\Set{r_i}_{i=0}^{T-1}$, and $r_T=1$
\Function{SampleState}{$\rho, i$}
    \State sample Bernoulli random variable $b \sim \mathcal{B}(r_i)$   
    \If{$b = 1$} \Return $\rho$ \EndIf
    \State apply channel $\Phi(\rho) = \dyad{0} \otimes \mathcal{F}(\rho) + \dyad{1} \otimes \mathcal{E} (\rho).$
    \State measure ancilla $a \sim \mathcal{B}(\tr[\mathcal{E}(\rho)])$ 
    \If{$a = 1$}
        \State $\rho \gets \frac{\mathcal{E}(\rho)}{\tr[\mathcal{E}(\rho)]}$
        \State $i \gets i+1$
    \Else
        \State $\rho \gets \rho_0$
        \State $i \gets 0$
    \EndIf
    \State \Return \Call{SampleState}{$\rho, i$}    
\EndFunction
\end{algorithmic}
\end{algorithm}
\end{figure}

\section{Results}

The proposed algorithm is summarized as pseudocode in Algorithm~\ref{alg:cap}
and the resulting stochastic process is illustrated in
Fig.~\ref{fig:algorithm_diagram}. Here we focus on three applications: solving
the discrete- and continuous-time Lyapunov equations and matrix inversion.

We introduce a deterministic stopping rule where the algorithm returns if it has
successfully completed $T$ iterations. This can be implemented simply by
setting $r_T = 1$. The
expected output state of the stopped process is captured in the following
lemma, which is shown in Appendix~\ref{app:exp_state}.
\begin{lemma}[Expected output state]\label{lemma:expected_state}
    Let $\Set{\mathcal{E},
    \mathcal{F}}$ be a quantum instrument with $\mathcal{E},
    \mathcal{F}$ CP maps between states on Hilbert space $\mathcal{H}$.
    Let $0 \leq r_k < 1$ for $k \in [T]$, $r_T=1$,
    and $\rho_0$ an arbitrary initial state on $\mathcal{H}$. Then the expected
    state at stopping time $\tau$ output by Algorithm~\ref{alg:cap} is given by
    \begin{equation}\label{eq:expected_rho}
        \mathbb{E}[\rho_{}] = \frac{\sum_{k=0}^T r_k R_k \mathcal{E}^k(\rho_0)}{\sum_{l=0}^T r_l R_l \tr[\mathcal{E}^l(\rho_0)]}
    \end{equation}
    with $R_k \coloneq \Pi_{j=0}^{k-1} (1 - r_j)$ for $k>0$ and $R_0=1$.
\end{lemma}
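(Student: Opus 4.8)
The plan is to track the quantum state the algorithm holds at each iteration, separately determine the probability of halting at each iteration, and then assemble the expectation from these two ingredients.

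First I would establish the deterministic structure of the held state. I claim that, conditioned on the algorithm currently being at iteration $k$ (reached through $k$ successful ancilla measurements), the state is exactly $\sigma_k \coloneq \mathcal{E}^k(\rho_0)/\tr[\mathcal{E}^k(\rho_0)]$. This follows by induction on $k$: the base case $\sigma_0=\rho_0$ is immediate, and the inductive step uses that applying the update $\rho \gets \mathcal{E}(\rho)/\tr[\mathcal{E}(\rho)]$ to $\sigma_k$ gives $\mathcal{E}(\mathcal{E}^k(\rho_0))/\tr[\mathcal{E}(\mathcal{E}^k(\rho_0))] = \sigma_{k+1}$, because renormalizing at every step is equivalent to renormalizing once at the end. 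In particular the ancilla success probability at iteration $k$ is $q_k \coloneq \tr[\mathcal{E}(\sigma_k)] = \tr[\mathcal{E}^{k+1}(\rho_0)]/\tr[\mathcal{E}^k(\rho_0)]$.

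Next I would exploit the renewal structure created by the restart branch. Whenever $a=0$ the algorithm discards its state and reissues a fresh copy of $\rho_0$ at $i=0$, so the process decomposes into independent, identically distributed ``attempts,'' each beginning anew at iteration $0$. Setting $r_T=1$ forces every attempt to terminate, so the algorithm halts almost surely and the returned state is the stopping state of whichever attempt first stops rather than restarts. By the memorylessness of the restart, the law of the output equals the law of the stopping state within a single attempt, conditioned on that attempt stopping; summing the geometric series over the number of preceding restarts makes this precise, giving overall halting probability $P_k = S_k/\sum_{l=0}^T S_l$, where $S_k$ is the single-attempt probability of stopping at iteration $k$. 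I would then evaluate $S_k$: stopping at iteration $k$ in one attempt requires not stopping and succeeding at each earlier iteration and then stopping at $k$, so $S_k = \left(\prod_{j=0}^{k-1}(1-r_j)q_j\right) r_k$. The coin factors give $\prod_{j=0}^{k-1}(1-r_j) = R_k$, while the ancilla factors telescope, $\prod_{j=0}^{k-1} q_j = \tr[\mathcal{E}^k(\rho_0)]/\tr[\mathcal{E}^0(\rho_0)] = \tr[\mathcal{E}^k(\rho_0)]$, using $\tr[\rho_0]=1$. Hence $S_k = r_k R_k \tr[\mathcal{E}^k(\rho_0)]$ and $S_k\sigma_k = r_k R_k \mathcal{E}^k(\rho_0)$, and substituting into $\mathbb{E}[\rho] = \sum_{k=0}^T P_k \sigma_k = \left(\sum_k S_k\sigma_k\right)/\left(\sum_l S_l\right)$ yields Eq.~\eqref{eq:expected_rho}.

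I expect the main obstacle to be rigorously justifying the renewal step, namely that restarts neither bias the conditional output distribution nor obstruct almost-sure termination. A self-contained alternative that avoids the geometric-series bookkeeping is to write the linear recursion $E_i = r_i\sigma_i + (1-r_i)\left[q_i E_{i+1} + (1-q_i)E_0\right]$ for the expected output $E_i$ conditioned on reaching iteration $i$, with terminal condition $E_T = \sigma_T$, and solve it; the coupling of every level back to $E_0$ through the restart branch is the only delicate part and is precisely what the renewal argument resolves cleanly.
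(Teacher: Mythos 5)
Your proof is correct, and it takes a genuinely different route from the paper's. The paper (Appendix~\ref{app:exp_state}) writes a single self-consistency identity: every restart subtree is replaced by $\mathbb{E}[\rho]$ itself, giving $\mathbb{E}[\rho] = \sum_{k=0}^T r_k R_k \mathcal{E}^k(\rho_0) + \mathbb{E}[\rho]\sum_{l=0}^T R_{l+1}\left(\tr[\mathcal{E}^l(\rho_0)]-\tr[\mathcal{E}^{l+1}(\rho_0)]\right)$, and then must verify --- the ``with some work one can show'' step --- that the restart coefficient equals $1-\sum_{l=0}^T r_l R_l \tr[\mathcal{E}^l(\rho_0)]$ before solving linearly for $\mathbb{E}[\rho]$. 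You instead exploit the renewal structure directly: the run decomposes into i.i.d.\ attempts, the single-attempt stopping law $S_k = r_k R_k \tr[\mathcal{E}^k(\rho_0)]$ follows from the telescoping product of ancilla-success probabilities $q_j=\tr[\mathcal{E}^{j+1}(\rho_0)]/\tr[\mathcal{E}^j(\rho_0)]$ (correct, using $\tr(\rho_0)=1$ and linearity of $\mathcal{E}$ for the claim $\sigma_k=\mathcal{E}^k(\rho_0)/\tr[\mathcal{E}^k(\rho_0)]$), and the normalization $\sum_l S_l$ then emerges automatically from conditioning on the first stopping attempt rather than from an algebraic identity. Note that your ``self-contained alternative,'' the backward recursion for $E_i$ coupled to $E_0$ through the restart branch, is essentially the paper's fixed-point method, so your primary renewal argument is the novel part; it buys explicit probabilistic bookkeeping and a transparent geometric-series step in place of the paper's unspelled summation identity. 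One caveat, shared with the paper: both arguments implicitly assume the denominator $\sum_{l=0}^T r_l R_l\tr[\mathcal{E}^l(\rho_0)]$ is positive, i.e., that a single attempt stops with positive probability; in the degenerate case where $r_k=0$ for all $k\in[T]$ and $\tr[\mathcal{E}^T(\rho_0)]=0$ the algorithm never halts and Eq.~\eqref{eq:expected_rho} is vacuous. Your closing remark about almost-sure termination is precisely where this nondegeneracy must be invoked, just as the paper needs it to divide by the restart coefficient when solving its linear equation.
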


\begin{figure}[t]
    \centering
    \includegraphics[width=\linewidth]{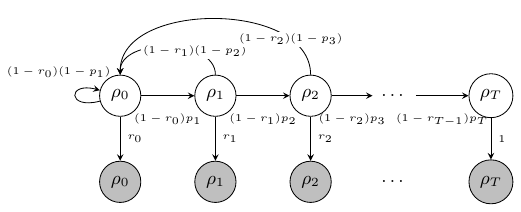}
    \caption{\textbf{Stochastic process of Algorithm~\ref{alg:cap}}. At each iteration $i$ we have state $\rho_i=\frac{\mathcal{E}^i(\rho_0)}{\tr[\mathcal{E}^i(\rho_0)]}$ (white circle), probability $r_i$ of returning the current state (grey) and stopping or probability $1-r_i$ of continuing. Conditioned on continuing we have probability $p_{i+1} = \tr[\mathcal{E}(\rho_i)]$ of applying the CP map $\mathcal{E}$ and probability $1-p_{i+1}$ of restarting with $\rho_0$ at $i=0$.}\label{fig:algorithm_diagram}
\end{figure}

We can generate different expected output states by tuning the probabilities $r_k$ and
changing $T$. Suppose we want to produce expected states with positive coefficients 
$r_k R_k = c_k > 0$ for all $k$. This can be achieved by setting the
Bernoulli probabilities to
\begin{align}
    r_k = \frac{c_k}{1 - \sum_{j=0}^{k-1} c_j} \text{ for } k \in [T].\label{eq:general_coefficients}
\end{align}
Since $r_T = 1$, we have the condition that $\sum_{k=0}^T c_k = 1$. Note that this setting is valid for $T\rightarrow\infty$ if the series of $c_k$ converges to $1$. In the following applications we work with finite $T$ and set all coefficients to a constant $c_k=1/(T+1)$. Then they cancel from Eq.~\eqref{eq:expected_rho} and we have probabilities $r_k = 1/(T+1-k)$ for all $k \in [T+1]$.

In principle, the algorithm could run forever, \eg\ in the case where
all iterations lead to a restart in Fig.~\ref{fig:algorithm_diagram}. The following
lemma shows that this does not happen on average because the expected stopping
time is bounded.
\begin{lemma}[Expected stopping time]\label{lemma:stopping_time}
    The expected stopping time of Algorithm~\ref{alg:cap} is given by
    \begin{align}
        \mathbb{E}[\tau]
        &=
        \frac{ \sum_{k=0}^{T} R_k \tr[\mathcal{E}^k(\rho_0)] } {\sum_{l=0}^{T} r_l R_l \tr[\mathcal{E}^l(\rho_0)]}.
    \end{align}
\end{lemma}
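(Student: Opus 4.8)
The plan is to organize the (possibly non-terminating) execution into independent and identically distributed \emph{runs}, where each run begins with a fresh copy of $\rho_0$ at iteration index $i=0$ and terminates either by stopping (the coin $b=1$ fires and the current state is returned) or by restarting (the ancilla outcome $a=0$ occurs). Every recursive call to \textsc{SampleState} advances the index by one within a run, so I identify the stopping time $\tau$ with the total number of such calls accumulated over all runs up to and including the first one that stops. Writing $\tau = \sum_{n\ge 1} L_n \mathbf{1}[G\ge n]$, where $L_n$ is the number of iterations performed in run $n$ and $G$ is the index of the first run that stops, reduces the claim to computing two per-run quantities and combining them.

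First I would compute the probability $q_k$ that a single run \emph{reaches} index $k$, i.e.\ that \textsc{SampleState} is invoked with iteration index $k$. Reaching index $k$ requires, at every earlier index $j\in[k]$, both that the coin gives $b=0$ (probability $1-r_j$) and that the ancilla measurement gives $a=1$ (probability $p_{j+1}=\tr[\mathcal{E}(\rho_j)]$). Using $\rho_j = \mathcal{E}^j(\rho_0)/\tr[\mathcal{E}^j(\rho_0)]$ from Fig.~\ref{fig:algorithm_diagram}, the success probabilities telescope,
\begin{equation}
    \prod_{j=0}^{k-1} p_{j+1}
    = \prod_{j=1}^{k} \frac{\tr[\mathcal{E}^{j}(\rho_0)]}{\tr[\mathcal{E}^{j-1}(\rho_0)]}
    = \tr[\mathcal{E}^{k}(\rho_0)],
\end{equation}
since $\tr(\rho_0)=1$, while the coin factors give $\prod_{j=0}^{k-1}(1-r_j)=R_k$; hence $q_k=R_k\tr[\mathcal{E}^k(\rho_0)]$. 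Two further quantities follow at once. Writing $L=\sum_{k}\mathbf{1}[\text{index }k\text{ reached}]$ gives the expected number of iterations per run $\mathbb{E}[L]=\sum_{k=0}^{T} q_k$, the sum truncating at $T$ because $r_T=1$ forces a stop. And the probability that a run stops (reach index $k$, then $b=1$) is $P_{\mathrm{stop}}=\sum_{k=0}^{T} r_k q_k$, which is precisely the denominator of Eq.~\eqref{eq:expected_rho}.

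It then remains to assemble $\mathbb{E}[\tau]=\mathbb{E}[L]/P_{\mathrm{stop}}$, which is exactly the stated ratio. The point where I expect the main care to be needed is that, within a single run, its length $L_n$ and whether it stops are correlated, so one cannot naively factor $\mathbb{E}[\tau]=\mathbb{E}[G]\,\mathbb{E}[L]$ without justification. The observation that resolves this is that the event $\{G\ge n\}$ --- that runs $1,\dots,n-1$ all restarted --- depends only on the runs preceding run $n$ and is therefore independent of $L_n$. Taking expectations term by term then gives $\mathbb{E}[L_n \mathbf{1}[G\ge n]]=\mathbb{E}[L]\,(1-P_{\mathrm{stop}})^{n-1}$, and summing the resulting geometric series yields $\mathbb{E}[\tau]=\mathbb{E}[L]/P_{\mathrm{stop}}$. (Equivalently one could perform a first-step analysis on $\mathbb{E}[\tau]$, splitting into the stop, successful-continuation, and restart branches and solving the resulting linear relation; I prefer the run decomposition because it isolates this single subtle independence argument.)
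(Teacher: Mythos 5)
Your proof is correct, and it takes a genuinely different route from the paper's. The paper proves Lemma~\ref{lemma:stopping_time} essentially by reference: Appendix~\ref{app:exp_time} instructs the reader to follow the derivation of \cite[Lemma 7]{zhang2023dissipative}, which is a first-step (path-sum) analysis in the same style as the paper's own expected-state computation in Appendix~\ref{app:exp_state} --- one sums the contributions of all restart-free paths and replaces each restart sub-tree by $\mathbb{E}[\tau]$ (offset by the elapsed time), then solves the resulting linear equation; this is exactly the alternative you mention parenthetically at the end. You instead use a renewal decomposition into i.i.d.\ runs: the telescoping product $\prod_{j=0}^{k-1} p_{j+1} = \tr[\mathcal{E}^k(\rho_0)]$ gives the reach probabilities $q_k = R_k \tr[\mathcal{E}^k(\rho_0)]$, whence $\mathbb{E}[L] = \sum_{k=0}^{T} q_k$ (the numerator), $P_{\mathrm{stop}} = \sum_{k=0}^{T} r_k q_k$ (the denominator), and $\mathbb{E}[\tau] = \mathbb{E}[L]/P_{\mathrm{stop}}$ via the Wald-type identity. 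Your handling of the one subtle point is right: within a run, $L_n$ and the run's outcome are correlated, but $\{G \ge n\}$ depends only on runs $1,\dots,n-1$ and is therefore independent of $L_n$, so $\mathbb{E}[L_n \mathbf{1}[G\ge n]] = \mathbb{E}[L](1-P_{\mathrm{stop}})^{n-1}$; nonnegativity licenses the term-by-term expectation. Your implicit identification of $\tau$ with the total number of calls to \textsc{SampleState} matches the paper's convention (e.g.\ $T=0$, $r_0=1$ gives $\mathbb{E}[\tau]=1$), and runs are almost surely finite since $r_T=1$ caps each run at $T+1$ calls. As for what each approach buys: the paper's recursion comes nearly for free given the expected-state derivation already in Appendix~\ref{app:exp_state}, whereas your argument is self-contained, avoids solving a fixed-point equation, and makes structurally transparent that the denominator equals the probability of stopping without ever restarting --- a fact the paper only observes separately, around Eq.~\eqref{eq:norm_const_estimation}, when estimating normalization constants. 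The only assumption worth flagging explicitly is $P_{\mathrm{stop}} > 0$, needed for your geometric series (and for the paper's formula to be meaningful at all); the paper makes the analogous assumption, e.g.\ $\tr[\mathcal{E}^k(\rho_0)] > 0$ in the proof of Corollary~\ref{cor:stopping_time}.
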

\noindent As shown in Appendix~\ref{app:exp_time}, the following corollary follows immediately.
\begin{corollary}\label{cor:stopping_time}
    Let $r_kR_k=c_k>0$ as in Eq.~\eqref{eq:general_coefficients}. Then the expected stopping time is bounded by
    \begin{align}
        \mathbb{E}[\tau]
        &\leq
        \frac{1}{\min\Set{c_k}}.
    \intertext{For the case
    $r_k = \frac{1}{T+1-k}$ with $k \in [T+1]$ the expected stopping time is
    bounded by}
        \mathbb{E}[\tau]
        &\leq
        T + 1.
    \end{align}
\end{corollary}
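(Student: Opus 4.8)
The plan is to start from the closed form for $\mathbb{E}[\tau]$ in Lemma~\ref{lemma:stopping_time} and bound the numerator and denominator separately. Abbreviating $t_k \coloneq \tr[\mathcal{E}^k(\rho_0)]$ and substituting $r_l R_l = c_l$ into the denominator gives
\begin{equation*}
    \mathbb{E}[\tau] = \frac{\sum_{k=0}^T R_k\, t_k}{\sum_{l=0}^T c_l\, t_l}.
\end{equation*}
Two elementary observations drive the rest. First, since $\mathcal{E}$ is a trace non-increasing CP map, every $\mathcal{E}^k(\rho_0)$ is positive semidefinite, so $t_k \geq 0$; moreover $t_0 = \tr(\rho_0) = 1$, so the $t_k$ are not all zero. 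Second, because $R_k = \prod_{j=0}^{k-1}(1 - r_j)$ is a product of factors in $(0,1]$ (using $0 \leq r_j < 1$ from Lemma~\ref{lemma:expected_state}), we have $0 \leq R_k \leq 1$ for every $k$.

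From here the bound is immediate. For the numerator I use $R_k \leq 1$ together with $t_k \geq 0$ to obtain $\sum_{k} R_k t_k \leq \sum_{k} t_k$. For the denominator I use $c_l \geq \min_k c_k > 0$ together with $t_l \geq 0$ to obtain $\sum_{l} c_l t_l \geq (\min_k c_k)\sum_{l} t_l$. Since $\sum_k t_k > 0$ it cancels between the two, leaving $\mathbb{E}[\tau] \leq 1/\min_k c_k$. Notably the exact telescoping value $R_k = \sum_{i \geq k} c_i$ is never needed---only the crude bound $R_k \leq 1$.

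The second claim then follows by specializing to the uniform choice $c_k = 1/(T+1)$ for all $k \in [T+1]$, which corresponds to $r_k = 1/(T+1-k)$; then $\min_k c_k = 1/(T+1)$ and hence $\mathbb{E}[\tau] \leq T+1$. I do not expect a genuine obstacle here: the only points needing care are the positivity $t_k \geq 0$ and the bound $R_k \leq 1$, both of which follow directly from the CP and trace non-increasing hypotheses on $\mathcal{E}$ and the stipulated range of the $r_j$.
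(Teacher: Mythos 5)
Your proposal is correct and follows essentially the same route as the paper's own proof in Appendix~\ref{app:exp_time}: bound the numerator of Lemma~\ref{lemma:stopping_time} using $R_k \leq 1$ (the paper writes $R_k = 1 - \sum_{j=0}^{k-1} c_j$ and drops the negative term, which is the same bound you obtain from the product form) and bound the denominator below by $\min\Set{c_l}\sum_l \tr[\mathcal{E}^l(\rho_0)]$, then specialize to $c_k = \frac{1}{T+1}$. Your explicit remarks that $t_0 = 1$ guarantees the cancelled sum is nonzero, and that the telescoped value of $R_k$ is never needed, are minor but sound refinements of the same argument.
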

Due to Markov's inequality, $\mathbb{P}(\tau \geq a \mathbb{E}[\tau]) \leq \frac{1}{a}$. For example, there is a 75\% chance that the algorithm stops within 4 times the expected stopping time.

When the process stops, the resulting state can be used for downstream tasks. In Appendix~\ref{app:expectation_values} we show how to estimate expectation values, state overlaps, and normalization constants.

\subsection{Discrete-time Lyapunov equation}
Consider $N\times N$ matrices $A, B$ and recall the discrete-time Lyapunov
equation~\eqref{eq:lyapunov_discrete}, $AXA^\dagger - X + B = 0$.
We assume that $\opnorm{A} < 1$, $[A, A^\dagger] = 0$, $B$ is nonzero positive semidefinite, denoted with $B\succeq 0$, and $\tr(B) = 1$. For simplicity we assume $N=2^n$
with $n$ the number of qubits of the main register representing the desired state. Since $\opnorm{A} < 1$, a unique solution to Eq.~\eqref{eq:lyapunov_discrete}
exists and can be written as~\cite{gajicLyapunovMatrixEquation2008}
\begin{equation}
    X = \sum_{k=0}^\infty A^k B A^{\dagger k}.    
\end{equation}

Our algorithm can produce a normalized approximate solution as its expected
state of the stopped process. To see this consider the CP map
$\mathcal{E}(\cdot)= A \,\cdot \, A^\dagger$, probabilities $r_k =
\frac{1}{T+1-k}$, and initial state $\rho_0 = B$. We assume that we have a state-preparation circuit for $B$. The map $\mathcal{E}$ is implemented by block encoding its Kraus operator $A$. Inserting these parameters into
Eq.~\eqref{eq:expected_rho} shows that our algorithm produces the
expected output state
\begin{equation}\label{eq:expected_rho_discrete_lyap}
    \mathbb{E}[\rho_{}]
    =
    \frac{\sum_{k=0}^{T} A^k B A^{\dagger k}}{\sum_{k=0}^{T} \tr (A^k B A^{\dagger k})}.
\end{equation}

Next we need to bound the approximation error from finite $T$. This allows us to
choose $T$ such that the error becomes arbitrarily small as summarized by the
following theorem.
\begin{theorem}[Solution of discrete-time Lyapunov equation]\label{thm:discrete_lyapunov}
    Let $A, B \in \mathbb{C}^{2^n\times
    2^n}$ with $\opnorm{A} < 1$, $[A, A^\dagger]=0$, $B \succeq 0$, $\tr(B)=1$,
    and $0 < \epsilon < 1$. If $\opnorm{A}\neq 0$, set
    \begin{equation}\label{eq:T_discrete_lyap}
        T^* \coloneq \ceil*{ \frac{ \ln (\frac{1}{\epsilon}) } {2 \ln (\frac{1}{\opnorm{A}}) } }
    \end{equation}
    and $T^*=0$ otherwise. Instantiate Algorithm~\ref{alg:cap} with $T\geq T^*$,
    probabilities $r_k=\frac{1}{T+1-k}$ for $k \in [T+1]$, the CP map
    $\mathcal{E}(\cdot) = A\,\cdot \, A^\dagger$, and initial state $\rho_0 = B$.
    Then the expected output state satisfies $\frac{1}{2}
    \tracenorm*{\mathbb{E}[\rho_{}] -\frac{X}{\tr(X)}} \leq \epsilon$, where
    $X$ is the solution of the discrete-time Lyapunov
    equation~\eqref{eq:lyapunov_discrete}.
\end{theorem}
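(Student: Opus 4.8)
The plan is to identify the algorithm's expected state with a truncation of the series solution and then control the truncation error in trace distance. Writing $X_T \coloneq \sum_{k=0}^{T} A^k B A^{\dagger k}$ for the unnormalized truncated sum, Eq.~\eqref{eq:expected_rho_discrete_lyap} gives $\mathbb{E}[\rho] = X_T/\tr(X_T)$, while $X = \sum_{k=0}^\infty A^k B A^{\dagger k}$. Since $B \succeq 0$, each summand $A^k B A^{\dagger k}$ is positive semidefinite, so $0 \preceq X_T \preceq X$ and the tail $X - X_T = \sum_{k=T+1}^\infty A^k B A^{\dagger k}$ is positive semidefinite; in particular $\tracenorm{X-X_T} = \tr(X-X_T)$. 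Note also $\tr(X) \geq \tr(B) = 1 > 0$, so all normalizations are well defined.

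First I would reduce the trace distance between the two normalized states to a scalar ratio. Abbreviating $t_T \coloneq \tr(X_T)$ and $t \coloneq \tr(X)$, the identity
\begin{equation}
  \frac{X_T}{t_T} - \frac{X}{t} = \frac{t - t_T}{t}\,\frac{X_T}{t_T} - \frac{X - X_T}{t},
\end{equation}
together with the triangle inequality, $\tracenorm{X_T/t_T}=1$, and $\tracenorm{X-X_T}=t-t_T$, yields $\tracenorm{\mathbb{E}[\rho] - X/\tr(X)} \leq 2(t-t_T)/t$. Hence it suffices to show $\tr(X-X_T)/\tr(X) \leq \epsilon$.

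The main step is bounding this ratio, and here normality is essential. Using cyclicity of the trace and $[A,A^\dagger]=0$, so that $A^{\dagger k}A^k = (A^\dagger A)^k$, I would rewrite $\tr(A^k B A^{\dagger k}) = \tr(P^k B)$ with $P \coloneq A^\dagger A \succeq 0$ and $\opnorm{P} = \opnorm{A}^2 < 1$. Summing the geometric operator series gives $\tr(X-X_T) = \tr[P^{T+1}(I-P)^{-1}B]$ and $\tr(X) = \tr[(I-P)^{-1}B]$. The spectral bound $P^{T+1} \preceq \opnorm{A}^{2(T+1)} I$, conjugated by the positive semidefinite operator $(I-P)^{-1/2}$ which commutes with $P$, gives $P^{T+1}(I-P)^{-1} \preceq \opnorm{A}^{2(T+1)} (I-P)^{-1}$. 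Pairing this operator inequality with $B \succeq 0$—using that $\tr(SB)\le\tr(S'B)$ whenever $0 \preceq S \preceq S'$—yields the clean bound $\tr(X-X_T)/\tr(X) \leq \opnorm{A}^{2(T+1)}$. Equivalently, diagonalizing the normal matrix $A$ expresses the ratio as a convex combination of the quantities $\abs{\lambda}^{2(T+1)}$ over eigenvalues $\lambda$ of $A$, each bounded by $\opnorm{A}^{2(T+1)}$. I expect the non-commutativity of $B$ and $P$ to be the only real obstacle, and it is exactly what the normality assumption removes.

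Finally I would close the argument by choosing $T$. For $T \geq T^*$, the definition~\eqref{eq:T_discrete_lyap} gives $2T\ln(1/\opnorm{A}) \geq \ln(1/\epsilon)$, hence $\opnorm{A}^{2(T+1)} < \opnorm{A}^{2T} \leq \epsilon$; combining with the previous two steps gives $\tfrac12\tracenorm{\mathbb{E}[\rho]-X/\tr(X)} \leq \epsilon$. The degenerate case $\opnorm{A}=0$ (so $A=0$, $T^*=0$) is immediate, since then $X_T = X = B$ and the trace distance vanishes.
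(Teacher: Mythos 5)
Your proposal is correct and follows essentially the same route as the paper's proof: reduce the trace distance of the normalized states to the relative trace of the tail, sum the Neumann series to get $\tr(X-X_T)=\tr[(A^\dagger A)^{T+1}(I-A^\dagger A)^{-1}B]$, bound the ratio by $\opnorm{A}^{2(T+1)}$ using the spectrum of $A^\dagger A$ (with normality), and solve for $T$. The only cosmetic differences are that you inline a direct derivation of the normalized-distance bound instead of invoking the paper's Claim~\ref{clm:normalized_semidef_dist}, and you phrase the spectral step as a Loewner-order inequality $P^{T+1}(I-P)^{-1}\preceq\opnorm{A}^{2(T+1)}(I-P)^{-1}$ rather than the paper's explicit eigenbasis computation---both are equivalent.
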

The proof is in Appendix~\ref{app:proof_error_discrete}. In Appendix~\ref{app:optimality_of_Tstar} we prove that the choice of $T^*$ in Theorem~\ref{thm:discrete_lyapunov} is asymptotically optimal.

Choosing $T=T^*$ results in the shortest expected stopping time. Then Corollary~\ref{cor:stopping_time} implies that the algorithm produces an approximation to the normalized solution of the discrete-time Lyapunov equation with (in expectation) at most  $T^*+1$ calls to the block encoding of $A$ and the state-preparation circuit for $B$.

\subsection{Continuous-time Lyapunov equation}
Now recall the continuous-time Lyapunov equation~\eqref{eq:lyapunov_continuous}, $AX + XA^\dagger + B = 0$.
In this subsection, we assume that all eigenvalues of $A$ have strictly negative real parts. Furthermore, $[A, A^\dagger]=0$, $B \succeq 0$, and $\tr(B) = 1$ as before. Eq.~\eqref{eq:lyapunov_continuous} has a unique solution if and only if the eigenvalues of $A$ satisfy $\lambda_i + \lambda_j^* \neq 0$ for all $\lambda_i, \lambda_j \in \lambda(A)$. This is satisfied because all eigenvalues have strictly negative real parts. Furthermore, the solution is given by~\cite{gajicLyapunovMatrixEquation2008}
\begin{align}
\label{eq:sol_cont_time_lyapunov}
    X = \int_0^\infty e^{tA} B e^{tA^\dag} dt.
\end{align}

To approximate this solution with our algorithm we truncate the integral in Eq.~\eqref{eq:sol_cont_time_lyapunov} to the domain $[0, L]$ and discretize using the left Riemann sum. We choose an interval size $\Delta$ and, for integer $T$, a cutoff $L = (T+1)\Delta$. Then
\begin{align}
\label{eq:riemann_sum}
    X \approx \Delta \sum_{k=0}^{\frac{L}{\Delta} - 1} e^{k \Delta A} B e^{k \Delta A^\dagger}.
\end{align}
Let us choose as the CP map $\mathcal{E}(\cdot) = e^{\Delta A}\,\cdot \, e^{\Delta A^\dagger}$, probabilities $r_k = \frac{1}{T+1-k}$, and initial state $\rho_0=B$. For the moment we assume that $\mathcal{E}$ is implemented exactly by block encoding its Kraus operator $e^{\Delta A}$. Inserting the parameters into Eq.~\eqref{eq:expected_rho} shows that our algorithm produces the
expected output state
\begin{equation}\label{eq:expected_rho_continuous_lyap}
    \mathbb{E}[\rho_{}]
    =
    \frac{\sum_{k=0}^{T} e^{k\Delta A} B e^{k\Delta A^\dagger}}{\sum_{k=0}^{T} \tr (e^{k\Delta A} B e^{k\Delta A^\dagger})}.
\end{equation}
It remains to bound the approximation error from the discretization (denoted $\epsilon_1$) and truncation ($\epsilon_2$)
of the solution. This allows us to choose parameters $T$ and
$\Delta$ such that the error becomes arbitrarily small as summarized by the
following theorem proven in Appendix~\ref{app:proof_error_continuous}.

\begin{theorem}[Solution of continuous-time Lyapunov equation]\label{thm:continuous_lyapunov}
    Let $A, B \in \mathbb{C}^{2^n\times 2^n}$ with $\Re(\lambda_i) < 0$ for all $\lambda_i\in \lambda(A)$, $[A,
    A^\dagger]=0$, $B \succeq 0$, $\tr(B)=1$, and $0 < \epsilon_1, \epsilon_2 < 1$. Set
    \begin{align}
        T^*
        &\coloneq \ceil*{ \frac{1}{2\epsilon_1} \ln( \frac{1}{\epsilon_2}) \frac{ \opnorm{A}\abs*{\min_k \Set{ \Re(\lambda_k) }}}{ \abs*{\max_j \Set{ \Re(\lambda_j) }}^2 } } , \label{eq:T_continuous_lyap}\\
        \Delta^*
        &\coloneq \frac{\epsilon_1}{\opnorm{A}}\frac{\max_j\Set{\Re(\lambda_j)}}{\min_k\Set{\Re(\lambda_k)}}.
        \label{eq:Delta_continuous_lyap}
    \end{align}
    Instantiate Algorithm~\ref{alg:cap} with $T \geq T^*$, $0 < \Delta \leq \Delta^*$,  probabilities $r_k=\frac{1}{T+1-k}$ for $k \in [T+1]$, the CP map $\mathcal{E}(\cdot) = e^{\Delta A} \,\cdot \,
    e^{\Delta A^\dagger}$, and initial state $\rho_0 = B$. Then the expected output state satisfies $\frac{1}{2} \tracenorm*{\mathbb{E}[\rho_{}] - \frac{X}{\tr(X)}} \leq \epsilon_1 + \epsilon_2$, where $X$ is the solution of the continuous-time Lyapunov equation~\eqref{eq:lyapunov_continuous}.
\end{theorem}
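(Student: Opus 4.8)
The statement reduces to bounding the trace distance between two normalized positive semidefinite operators: the expected output $\mathbb{E}[\rho]=S_T/\tr(S_T)$ with $S_T=\sum_{k=0}^{T}e^{k\Delta A}Be^{k\Delta A^\dagger}$, given by Eq.~\eqref{eq:expected_rho_continuous_lyap}, and the normalized target $X/\tr(X)$. The plan is to interpolate through the normalized truncated integral $X_L/\tr(X_L)$, where $X_L=\int_0^{L}e^{tA}Be^{tA^\dagger}\,dt$ and $L=(T+1)\Delta$, and to control the two resulting pieces separately via the triangle inequality
\begin{equation}
\begin{split}
\tfrac12\tracenorm*{\tfrac{S_T}{\tr(S_T)}-\tfrac{X}{\tr(X)}}
&\le\tfrac12\tracenorm*{\tfrac{S_T}{\tr(S_T)}-\tfrac{X_L}{\tr(X_L)}}\\
&\quad+\tfrac12\tracenorm*{\tfrac{X_L}{\tr(X_L)}-\tfrac{X}{\tr(X)}},
\end{split}
\end{equation}
bounding the first (discretization) term by $\epsilon_1$ and the second (truncation) term by $\epsilon_2$. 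The workhorse is the elementary bound $\tfrac12\tracenorm*{P/\tr(P)-Q/\tr(Q)}\le\tracenorm{P-Q}/\max\{\tr(P),\tr(Q)\}$ for nonzero $P,Q\succeq0$, which I would establish by writing $P/\tr(P)-Q/\tr(Q)=(P-Q)/\tr(P)+Q\,(\tr(Q)-\tr(P))/(\tr(P)\tr(Q))$ and using $\abs{\tr(P-Q)}\le\tracenorm{P-Q}$. Since $[A,A^\dagger]=0$, I would diagonalize $A=V\Lambda V^\dagger$ and work in its eigenbasis, where the integrand $g(t)=e^{tA}Be^{tA^\dagger}$ has entries $(V^\dagger BV)_{ij}\,e^{t(\lambda_i+\lambda_j^*)}$; in particular $\tr g(t)=\sum_j(V^\dagger BV)_{jj}\,e^{2t\Re(\lambda_j)}$ is an explicit nonnegative exponential sum with $\sum_j(V^\dagger BV)_{jj}=1$.

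The truncation term is the easy half. Because $g(t)\succeq0$, the tail $X-X_L=\int_L^\infty g(t)\,dt$ is positive semidefinite, so $\tracenorm{X-X_L}=\tr(X)-\tr(X_L)$ and $\max\{\tr(X_L),\tr(X)\}=\tr(X)$. Evaluating the traces in the eigenbasis then gives
\begin{equation}
\begin{split}
\tfrac12\tracenorm*{\tfrac{X_L}{\tr(X_L)}-\tfrac{X}{\tr(X)}}
&\le\frac{\tr(X)-\tr(X_L)}{\tr(X)}\\
&=\frac{\sum_j(V^\dagger BV)_{jj}\,e^{2L\Re(\lambda_j)}/\abs{\Re(\lambda_j)}}{\sum_j(V^\dagger BV)_{jj}/\abs{\Re(\lambda_j)}}\le e^{2L\max_j\Re(\lambda_j)},
\end{split}
\end{equation}
since the middle ratio is a convex combination of the $e^{2L\Re(\lambda_j)}$. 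Demanding $e^{2L\max_j\Re(\lambda_j)}\le\epsilon_2$ fixes the cutoff $L=(T+1)\Delta\ge\ln(1/\epsilon_2)/(2\abs{\max_j\Re(\lambda_j)})$, which is precisely the relation built into $T^*$ and $\Delta^*$ through $L=(T+1)\Delta$, noting $T^*=\ceil*{L^\ast/\Delta^*}$ for the required cutoff $L^\ast$.

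The discretization term is the crux. Here $\Delta S_T$ is the left Riemann sum of $X_L$ on the grid $t_k=k\Delta$, and the per-interval remainder is $\int_{k\Delta}^{(k+1)\Delta}[g(t)-g(k\Delta)]\,dt=\int_{k\Delta}^{(k+1)\Delta}\big((k+1)\Delta-s\big)\,g'(s)\,ds$ with $g'(s)=Ag(s)+g(s)A^\dagger$. Using the trace-norm inequality $\tracenorm{Ag(s)}\le\opnorm{A}\tracenorm{g(s)}$ to get $\tracenorm{g'(s)}\le2\opnorm{A}\tr g(s)$ and then $(k+1)\Delta-s\le\Delta$, summing over $k$ telescopes the integral of $\tr g$ and yields the clean estimate $\tracenorm{\Delta S_T-X_L}\le2\opnorm{A}\Delta\,\tr(X_L)$ (the sharper remainder $\int(t-k\Delta)\,dt=\Delta^2/2$ recovers the factor $\opnorm A\Delta^2\tr S_T$ and the constants of $\Delta^*$). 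To expose the stated eigenvalue dependence I would then bound the numerator trace from above by $\tr(X_L)\le\tr(X)=\sum_j(V^\dagger BV)_{jj}/(2\abs{\Re(\lambda_j)})\le1/(2\abs{\max_j\Re(\lambda_j)})$, and the denominator from below by $\tr(X_L)\ge(1-\epsilon_2)/(2\abs{\min_k\Re(\lambda_k)})$, using $1/\abs{\Re(\lambda_j)}\ge1/\abs{\min_k\Re(\lambda_k)}$ together with $e^{2L\Re(\lambda_j)}\le\epsilon_2$. Dividing produces a discretization error of order $\opnorm{A}\Delta\,\abs{\min_k\Re(\lambda_k)}/\abs{\max_j\Re(\lambda_j)}$, so requiring this to be at most $\epsilon_1$ gives exactly $\Delta\le\Delta^*$.

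The hard part will be this discretization estimate, for two reasons. First, $g(t)-g(k\Delta)$ is not positive semidefinite (its off-diagonal entries carry the complex phases $e^{t(\lambda_i+\lambda_j^*)}$), so I cannot equate its trace norm with its trace as in the truncation step; the detour through the derivative identity $g'=Ag+gA^\dagger$ and the inequality $\tracenorm{Ag}\le\opnorm{A}\tracenorm{g}$ is what rescues trace-norm control. Second, the factor $\abs{\max_j\Re(\lambda_j)}/\abs{\min_k\Re(\lambda_k)}$ appearing in $\Delta^*$ comes entirely from the mismatch between the upper bound on the numerator trace, governed by the slowest-decaying mode $\max_j\Re(\lambda_j)$, and the lower bound on the denominator trace, governed by the fastest-decaying mode $\min_k\Re(\lambda_k)$; keeping these consistent while tracking the dependence on $L$, $\Delta$, and $\opnorm{A}$ is the main bookkeeping effort. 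Adding the two halves gives a total trace distance of at most $\epsilon_1+\epsilon_2$, and substituting $L=(T+1)\Delta$ with $T\ge T^*$ and $\Delta\le\Delta^*$ verifies both thresholds simultaneously.
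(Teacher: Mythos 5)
Your proposal is correct, but it takes a genuinely different route through the triangle inequality than the paper. The paper interpolates through the \emph{infinite discretized sum} $\mathcal{S}(\infty)$: it first bounds $\frac12\tracenorm*{\mathcal{I}(\infty)/\tr[\mathcal{I}(\infty)]-\mathcal{S}(\infty)/\tr[\mathcal{S}(\infty)]}\le\epsilon_1$ and then reuses the discrete-time analysis verbatim on $\mathcal{E}(\cdot)=e^{\Delta A}\cdot e^{\Delta A^\dagger}$ to get $\frac12\tracenorm*{\mathcal{S}(\infty)/\tr[\mathcal{S}(\infty)]-\mathcal{S}(L)/\tr[\mathcal{S}(L)]}\le\opnorm{e^{\Delta A}}^{2L/\Delta}=e^{2RL}\le\epsilon_2$. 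You instead truncate first and discretize second, interpolating through the truncated integral $\mathcal{I}(L)$; your truncation bound via the convex combination $\sum_j w_j e^{2L\Re(\lambda_j)}\le e^{2RL}$ is equally valid and more self-contained. Both proofs share the key lemma (Claim~\ref{clm:normalized_semidef_dist}) and the derivative trick $F'=e^{tA}(AB+BA^\dagger)e^{tA^\dagger}$ with $\tracenorm{F'(s)}\le2\opnorm{A}\tr F(s)$. The interesting payoff of your ordering is in the discretization term: because you compare two \emph{finite} objects, your telescoped remainder $\tracenorm{\Delta S_T-\mathcal{I}(L)}\le2\opnorm{A}\Delta\,\tr[\mathcal{I}(L)]$ (or, with the $\Delta^2/2$ remainder and monotonicity of $\tr F$, $\le\opnorm{A}\Delta\,\tr(\Delta S_T)$) cancels directly against the denominator of the claim, giving a bound $O(\opnorm{A}\Delta)$ with \emph{no} eigenvalue ratio. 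In the paper's scheme the ratio $\abs{\min_k\Re(\lambda_k)}/\abs{\max_j\Re(\lambda_j)}$ is genuinely forced, because the numerator $\tracenorm{\mathcal{S}(\infty)-\mathcal{I}(\infty)}\le\opnorm{A}\Delta^2/(1-e^{2R\Delta})$ decays with the slowest mode $R$ while the denominator $\tr[\mathcal{S}(\infty)]\ge\Delta/(1-e^{2r\Delta})$ is controlled by the fastest mode $r$; your route shows this mismatch is an artifact of summing over the whole half-line, and in fact yields a sharper sufficient condition ($\Delta\lesssim\epsilon_1/\opnorm{A}$, hence $T\sim\kappa/\epsilon_1$ rather than $\kappa^2/\epsilon_1$ in the Hermitian case), which of course still implies the theorem since $\Delta^*\le\epsilon_1/\opnorm{A}$.

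One bookkeeping wrinkle you should fix: your detour "to expose the stated eigenvalue dependence" --- replacing $\tr[\mathcal{I}(L)]$ in the numerator by $\tr(X)\le1/(2\abs{\max_j\Re(\lambda_j)})$ and in the denominator by $(1-\epsilon_2)/(2\abs{\min_k\Re(\lambda_k)})$ --- \emph{weakens} your own bound by the factor $\frac{\abs{\min_k\Re(\lambda_k)}}{\abs{\max_j\Re(\lambda_j)}(1-\epsilon_2)}\ge1$, so requiring it to be $\le\epsilon_1$ gives $\Delta\le(1-\epsilon_2)\Delta^*$ (and with your factor-$2$ constant, half that), not "exactly $\Delta\le\Delta^*$" as written. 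Drop the detour and keep the direct cancellation you already derived: with the $\Delta^2/2$ remainder this gives discretization error $\le\opnorm{A}\Delta\le\opnorm{A}\Delta^*=\epsilon_1\,\abs{\max_j\Re(\lambda_j)}/\abs{\min_k\Re(\lambda_k)}\le\epsilon_1$, which verifies the theorem cleanly. Finally, note (as in the paper) that the two conditions $T\ge T^*$, $\Delta\le\Delta^*$ only guarantee $L=(T+1)\Delta\ge L^*$ when they are coupled through $L$, since $T^*\Delta^*=L^*$ up to the ceiling; your closing sentence inherits this same imprecision from the theorem statement rather than introducing a new gap.
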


Corollary~\ref{cor:stopping_time} implies that the algorithm requires in expectation at most $T^*+1$ calls to the block encoding of $e^{\Delta A}$ and the state-preparation circuit for $B$. This is multiplied by the cost of implementing the block encoding, which, of note, shall not have subnormalization factors that can bias the transition probabilities $r_k$.
In the special case of Hermitian $A$, the condition number is $\kappa=\frac{\min_k \Set{\lambda_k}}{\max_j\Set{\lambda_j}}$, and thus we find $T^* = \ceil*{ \frac{\kappa^2}{2\epsilon_1}\ln(\frac{1} {\epsilon_2}) }$ and $\Delta^* = \frac{\epsilon_1}{\kappa\opnorm{A}}$. 

\subsection{Matrix inversion}

Consider a matrix $A \succ 0$ and the task of finding its normalized inverse. We use $\kappa$ to indicate its condition number. We can reduce the problem to the discrete-time Lyapunov equation $A' X (A')^\dag - X + B = 0$ with $A' = \sqrt{I-A/\opnorm{A}}$ and $B = I/N$. To see this, note that $X = A^{-1} \opnorm{A}/N$ commutes with $A'$ and solves the equation. Since $\opnorm{A'} = \sqrt{1 - 1/\kappa} < 1$, the solution is unique. Finding the solution with our algorithm requires the CP map $\mathcal{E}(\cdot) = \sqrt{I - A/\opnorm{A}} \,\cdot \, \sqrt{I - A/\opnorm{A}}$. Plugging this map in Theorem~\ref{thm:discrete_lyapunov} we obtain a state that satisfies $\frac{1}{2}\tracenorm*{\mathbb{E}[\rho_{}] - \frac{A^{-1}}{\tr(A^{-1})}}\leq \epsilon$ and expected stopping time bounded by $\ceil*{ \kappa \ln(1/\epsilon)}+1$ yielding query complexity $\mathcal{O}[\kappa \ln(1/\epsilon)]$ in expectation~\footnote{Here we use that $1/\ln(1/\opnorm{A'}) = 1/\ln(1/\sqrt{1 - 1/\kappa}) < 2 \kappa$ for $\kappa >1$.}. 

An alternative path to matrix inversion is to reduce it to the continuous-time Lyapunov equation $A'X + X(A')^\dag +B =0$ with $A' = -A/2$ and $B=I/N$. Note that $X = A^{-1}/N$ solves the equation. Since $\Re(\lambda'_i) = - \Re(\lambda_i)/2<0$ for any pair of eigenvalues $\lambda'_i$ and $\lambda_i$ corresponding to the same eigenvector of $A'$ and $A$, respectively, the solution is unique. Using Theorem~\ref{thm:continuous_lyapunov} for this case with the choice $\epsilon_1=\epsilon_2=\epsilon/2$, we obtain an expected output state that satisfies $\frac{1}{2}\tracenorm*{\mathbb{E}[\rho_{}] - \frac{A^{-1}}{\tr(A^{-1})}}\leq \epsilon$ using the CP map $\mathcal{E}(\cdot) = e^{-\frac{\epsilon A}{2\kappa \opnorm{A}}} \, \cdot \, e^{-\frac{\epsilon A}{2\kappa \opnorm{A}}}$. The expected stopping time is bounded by $\ceil*{ \frac{\kappa^2}{\epsilon} \ln(\frac{2}{\epsilon}) }+1$.

In Table~\ref{tab:summary} we denote the discrete-time approach as matrix Inversion I, and the continuous-time approach as matrix Inversion II. Which of the two algorithms is preferable? The discrete-time approach has a favorable scaling in both $\epsilon$ and $\kappa$, but the answer really depends on the cost of the block encoding. If we can block encode $\sqrt{A/\opnorm{A}}$, then the qubitization trick~\cite[Lemmas 8 and 10]{Low_2019} yields the block encoding of $\sqrt{I -A/\opnorm{A}}$ in an off-diagonal block, which could be used in our discrete-time algorithm. Alternatively, one can use the quantum singular value transformation~\cite{gilyenQuantumSingularValue2019} to block-encode approximations to functions of $A$, including $\sqrt{I -A/\opnorm{A}}$ and $e^{-\frac{\epsilon A}{2\kappa \opnorm{A}}}$ (see Appendix~\ref{app:block_encoding_exp} for an explicit construction of the latter).

\subsection{Approximate implementation}
So far we have assumed an exact implementation of the CP map $\mathcal{E}$ via a block encoding of $M$. Now suppose that we can only implement an approximate block encoding $\tilde M$ yielding the trace nonincreasing CP map $\tilde{\mathcal{E}}(\cdot) = \tilde M\,\cdot \, \tilde M^\dagger$. This leads to a further additive error $\epsilon_{\sim}$ in the trace distance $\frac{1}{2} \tracenorm*{\mathbb{E}[\rho_{}] - \frac{X}{\tr(X)}}$. As shown in Appendix~\ref{app:robustness} this error can be achieved in practice by choosing a block encoding such that $\opnorm{M - \tilde{M}} \leq \frac{\epsilon_\sim}{2T(T+1)}$ with $T$ from Theorem~\ref{thm:discrete_lyapunov} or Theorem~\ref{thm:continuous_lyapunov}, respectively. For example, the error of the continuous-time algorithm with this approximate implementation is $\frac{1}{2} \tracenorm*{\mathbb{E}[\rho_{}] - \frac{X}{\tr(X)}} \leq \epsilon_1 + \epsilon_2 + \epsilon_{\sim}$. In Appendix~\ref{app:block_encoding_exp} we provide an explicit construction of the approximate block encoding of $M=e^{\Delta A}$, using $\mathcal{O}\left[\kappa \log (\frac{T}{\epsilon_\sim})\right]$ calls to the block encoding of $A$.

\section{Conclusion}

We have presented a probabilistic quantum algorithm for approximating solutions to Lyapunov equations and matrix inversion under certain conditions. Given block-encoding access to a function of the normal input matrix $A$ and a state-preparation circuit for $\rho_0$, in expectation the algorithm outputs mixed states $\propto \sum_{k=0}^T \mathcal{E}^k(\rho_0)$. With the choice of $T$, $\mathcal{E}$, and $\rho_0$ in Table~\ref{tab:summary} the expected output state is proportional to the respective solution up to the given error in trace distance. The expected stopping time of the algorithm is at most $T+1$ yielding query complexity $\mathcal{O}(T)$. For positive definite $A$ and access to a block encoding of $\sqrt{A/\opnorm{A}}$ the query complexity of matrix inversion matches the lower bound $\Omega[\kappa\ln(1/\epsilon)]$ of quantum linear system solvers~\cite{costaDiscreteAdiabaticQuantum2025,orsucciSolvingClassesPositivedefinite2021,mori2026sparsity}. In contrast, however, we only get sample access to the expected output state representing the solution. Nevertheless, our algorithm can be used in applications that require estimating expectations, overlaps, or traces of the normalized solution.

The presented algorithm could be extended to higher-precision quadrature formulas for Eq.~\eqref{eq:sol_cont_time_lyapunov} and to approximating more general integrals than Eq.~\eqref{eq:sol_cont_time_lyapunov}. First, admitting a different CP map at each iteration yields an expected output state $\propto \sum_{k=0}^T c_k (\mathcal{E}_k\circ\cdots\circ \mathcal{E}_0)(\rho_0)$. For example, $\mathcal{E}_k(\cdot) = e^{(t_k-t_{k-1})A} \, \cdot \, e^{(t_k-t_{k-1})A^\dagger}$ with a suitable choice of $t_k$, $c_k$ leads to a matrix equivalent of Gaussian quadrature~\cite{sinapPolynomialInterpolationGaussian1994}. For smooth scalar functions Gaussian quadrature rules typically converge much faster than the Riemann sum we use. It would be interesting to analyze whether this can improve the scaling of parameter $T$ with $\epsilon_1$ in the continuous-time Lyapunov case. Second, in Eq.~\eqref{eq:sol_cont_time_lyapunov} we could introduce a weighting function $f(t)$ or admit an anti-Hermitian $A$. The former leads to a generalization of the Laplace transform to matrices, the latter to the (one-sided) operator Fourier transform. It would be fruitful to analyze approximation errors and applications of those generalizations. 

\bigskip
\paragraph*{Acknowledgments.}
We thank Hitomi Mori, Maria Tudorovskaya, and Alexandre Krajenbrink for helpful feedback, and Frederic Sauvage for useful discussions.

\onecolumngrid
\appendix

\section{Incomparability of block encodings and states as resources}
\label{app:incompatibility}

Consider a positive semidefinite matrix $C\in\mathbb{C}^{N\times N}$ of spectral norm $\opnorm{C}=1$. This data can be encoded as a state $\rho_C=C/\tr[C]$, or as a block encoding $U_C$ such that $\left(I\otimes\bra{G}\right)U_C\left(I\otimes\ket{G}\right)=C$ for some known state $\ket{G}$. The following two theorems, which are adapted from Ref.~\cite{Somma_2025}, show that the two encodings are incomparable computational resources.

\begin{theorem}
For a permutation matrix $P=\sum_{j=1}^N\ket{P(j)}\!\bra{j}\in\mathbb{C}^{N\times N}$, let $C=(2I+P+P^\dag)/4$. Consider the decision problem of determining whether $\bra{1}P\ket{1}$ is $1$ or $0$, that is, respectively, whether $\bra{1}C\ket{1}$ is $1$ or $1/2$. This problem can be solved with high probability using a constant number of queries to the block encoding $U_C$, but it requires $\Omega(\sqrt{N})$ copies of the mixed state $\rho_C$.
\end{theorem}

Note that both $P$ and its inverse $P^\dag$ yield the same $C$ and, in the reverse direction, $C$ uniquely determines the unordered pair $\Set{P,P^\dag}$ (which is a singleton if $P$ is its own inverse). Hence the problem is well defined.

\begin{proof}
Consider an algorithm that $\ell$ times repeats the following: it applies $U_C$ to $\ket{1}\ket{G}$ and then performs the two-outcome projective measurement corresponding to $\ket{1}\ket{G}$ and its orthogonal complement. If all $\ell$ measurements yield $\ket{1}\ket{G}$ the algorithm returns $\innerp{1}{P}{1}=1$, otherwise it returns $\innerp{1}{P}{1}=0$. Since the algorithm essentially samples from the probability distribution $\left(\innerp{1}{C}{1}^2,\,1-\innerp{1}{C}{1}^2\right)^\ell$, it fails with probability at most $1/4^\ell$. Hence, for any constant error $\epsilon>0$ a constant number of queries to $U_C$ suffices. 

For the lower bound on the copies of $\rho_C$, we will employ Ref.~\cite[Theorem 2]{Somma_2025}. This tells us that, if we are given asymptotically fewer than $\sqrt{N}$ copies of the state $\ket{\phi_P} \coloneq \frac{1}{\sqrt{N}} \sum_{j=1}^N\ket{j,P(j)}$, then we cannot with high probability decide if $P(1)=1$. Thus, it suffices to show that there is a random procedure, which, in expectation, uses a constant number of copies of $\ket{\phi_P}$ and prepares the state $\rho_C$. Let $F_P \coloneq \Set{j\given P(j)=j}$ be the set of fixed points of $P$ and let $M_P \coloneq \abs{F_P}=\tr[P]=\tr[P^\dag]$. Note that
\begin{align}
\rho_C
& = \frac{2I+P+P^\dag}{2N+2M_P}  \\
& =
\frac1{2N+2M_P}
\left(
\sum_{j\in F_P}4\dyad{j}
+\sum_{j\notin F_P}\Big(2\dyad{j}+\ketbra{P(j)}{j}+\ketbra{P^\dag(j)}{j}\Big)
\right)
 \\& =
\frac1{2N+2M_P}
\left(
\sum_{j\in F_P}4\dyad{j}
+\sum_{j\notin F_P}\Big(
\dyad{j}+\dyad{P(j)}+\ketbra{P(j)}{j}+\ketbra{j}{P(j)}
\Big)
\right)
 \\
\label{eq:rhoC4}
& =
\frac{2N}{N+M_P}
\left(
\frac{1}{N}
\sum_{j\in F_P}\dyad{j}
+\frac{1}{2N}\sum_{j\notin F_P}\frac{\ket{j}+\ket{P(j)}}{\sqrt{2}}\frac{\bra{j}+\bra{P(j)}}{\sqrt{2}}
\right).
\end{align}
Now we show that, given the expected number of $2/(1+M_P/N)\in[1,2]$ copies of the state $\ket{\phi_P}$, we can prepare one copy of $\rho_C$. First, we measure $\ket{\phi_P}$ in the computational basis, obtaining $\ket{j,P(j)}$ for a uniformly random $j$. If $P(j)=j$, we prepare and return the state $\ket{j}$. If $P(j)\ne j$, with probability $1/2$ we prepare and return the state $(\ket{j}+\ket{P(j)})/\sqrt{2}$ and with probability $1/2$ we restart the procedure with a fresh copy of $\ket{\phi_P}$.

To see that this procedure indeed gives the state $\rho_C$ in expectation, consider Eq.~\eqref{eq:rhoC4}. The term $\frac{1}{N}
\sum_{j\in F_P}\dyad{j}$ corresponds to the event when the measurement of $\ket{\phi_P}$ yields $\ket{j,j}$. The trace of this former term is $\frac{M_P}{N}$, which is the probability of this event. The other term, namely, 
\begin{equation}
\frac12\cdot\frac{1}{N}\sum_{j\notin F_P}\frac{\ket{j}+\ket{P(j)}}{\sqrt{2}}\frac{\bra{j}+\bra{P(j)}}{\sqrt{2}},
\end{equation}
corresponds to the event when \emph{both} the measurement of $\ket{\phi_P}$ yields $\ket{j,P(j)}$ with $P(j)\ne j$ \emph{and}, subsequently, we decide with probability $1/2$ not to restart the procedure. The trace of this latter term is $\frac{N-M_P}{2N}$, which is the probability of this event.
Hence, the unnormalized mixed state in the parenthesis of Eq.~\eqref{eq:rhoC4} is the output of the procedure conditioned on having concluded without a restart. The trace of this state is $\frac{N+M_P}{2N}$, which is the probability of the procedure concluding without a restart, therefore we need to multiply by its reciprocal to normalize the state, thus obtaining $\rho_C$.
Note that $\frac{N+M_P}{2N}=1-\frac{N-M_P}{2N}$, where $\frac{N-M_P}{2N}$ is the probability that we restart the procedure.
\end{proof}

\begin{theorem}
Let $C\in\mathbb{C}^{N\times N}$ be a binary diagonal matrix with exactly one nonzero entry, that is, $C=\dyad{j}$ for some unknown $j\in\Set{1,\ldots,N}$. Consider the search problem of finding the unique $j$ such that $\innerp{j}{C}{j}=1$. This problem requires $\Omega(\sqrt{N})$ queries to the block encoding $U_C$, but can be solved while a single copy of the state $\rho_C$.
\end{theorem}

\begin{proof}
The lower bound $\Omega(\sqrt{N})$ on the number of queries to the block encoding is given by Ref.~\cite[Theorem 4]{Somma_2025}. On the other hand, because $\rho_C=C=\dyad{j}$, measuring the unique copy of $\rho_C$ in the computational basis yields $j$ with certainty.
\end{proof}

Arguably, an even stronger evidence of incomparability of $U_C$ and $\rho_C$ as computational resources would be to give two computational problems for the same $C$, such that for one problem $U_C$ is a more useful resource while $\rho_C$ is more useful for the other. Whether such $C$ exists is left as an open problem.

\section{Analysis of the probabilistic algorithm}

The algorithm is described in the main text and presented as pseudocode in Algorithm~\ref{alg:cap}. Here we show one iteration of the algorithm as a transformation, while hiding all the intermediary steps such as coin flip, block encoding, and ancilla measurement:
\begin{align}
\label{eq:intuition}
    \rho_i \; \longrightarrow \; r_i \underbrace{\rho_i}_\text{return} + \; (1 - r_i) \tr[\mathcal{E}(\rho_i)] \;  \underbrace{\frac{\mathcal{E}(\rho_i)}{\tr [\mathcal{E}(\rho_i)]}}_{\substack{ =: \; \rho_{i+1} \\ \text{continue}}} + \; (1-r_i) (1 - \tr [\mathcal{E}(\rho_i)]) \underbrace{\rho_0}_\text{restart}.
\end{align}

\subsection{Expected output state (Lemma~\ref{lemma:expected_state})}
\label{app:exp_state}

We calculate the expected output state $\mathbb{E}[\rho_{}]$ of the algorithm following the derivation in Zhang et al.~\cite{zhang2023dissipative}. Here $\rho_{}$ denotes the quantum state of the stopped process.
We begin by noting that whenever the algorithm restarts, it does so from the initial state $\rho_0$ and the iteration counter is reset to $i=0$. In this case the whole probability subtree can be replaced by $\mathbb{E}[\rho_{}]$. 
We have
\begin{align}
    \mathbb{E}[\rho_{}] &= r_0 \rho_0 + (1-r_0) \left(1 - \tr [\mathcal{E}(\rho_0)] \right) \mathbb{E}[\rho_{}]
    \\ 
    &
    + r_1 (1- r_0) \mathcal{E}(\rho_0) + (1 - r_1) (1 - r_0) \left(\tr[\mathcal{E}(\rho_0)] - \tr[\mathcal{E}^2(\rho_0)] \right) \mathbb{E}[\rho_{}]
    \\
    &
    + \dots
    \\
    &
    + \left[r_{k} \prod_{j=0}^{k-1} (1 - r_j) \right] \mathcal{E}^{k}(\rho_0) + \left[ \prod_{i=0}^{k} (1 - r_i) \right] \left(\tr[\mathcal{E}^{k}(\rho_0)] - \tr[\mathcal{E}^{k+1}(\rho_0)] \right) \mathbb{E}[\rho_{}]
    \\
    &
    + \dots
    \\
    &
    + \left[ \prod_{l=0}^{T-1} (1 - r_l) \right] \mathcal{E}^{T}(\rho_0) ,
\end{align}
where the last line is given by the deterministic stopping rule, $r_T = 1$. Grouping terms and introducing the notation $R_k = \prod_{l=0}^{k-1} (1 - r_l)$ and $R_0 = 1$ we have 
\begin{align}
    \mathbb{E}[\rho_{}]
    &=
    \sum_{k=0}^{T} r_k R_k \mathcal{E}^k(\rho_0) + \mathbb{E}[\rho_{}] \sum_{l=0}^T R_{l+1} \left\{\tr[\mathcal{E}^l(\rho_0)] - \tr[\mathcal{E}^{l+1}(\rho_0)] \right\}.
\end{align}
With some work one can show that the second summation is equivalent to $1 - \sum_{l=0}^{T} r_l R_l \tr[\mathcal{E}^{l}(\rho_0)]$. Solving for the expected state we arrive at
\begin{align}
\label{eq:expected_rho_app}
    \mathbb{E}[\rho_{}]
    =
    \frac{ \sum_{k=0}^{T} r_k R_k \mathcal{E}^k(\rho_0) } {\sum_{l=0}^{T} r_l R_l \tr[\mathcal{E}^l(\rho_0)]} .
\end{align}
By tuning the probabilities $r_k$ and the stopping parameter $T$ it is now possible to implement different expected states. As a warm-up example, we consider the case where $r_k R_k = c$ for some constant $c > 0$ and for all $k$, thus canceling out from Eq.~\eqref{eq:expected_rho_app}. This can be achieved by setting the Bernoulli probabilities to $r_k = \frac{c}{1 - k c}$. Since $r_T = 1$ we find that $c = \frac{1}{T+1}$. Thus, we have
\begin{align}
\label{eq:constant_coefficients_app}
    r_k
    =
    \frac{1}{1 + T  - k} \text{ for } k \in [T+1]
    \quad \implies \quad
    \mathbb{E}[\rho_{}]
    =
    \frac{\sum_{k=0}^T \mathcal{E}^k (\rho_0)}{\sum_{l=0}^T \tr [\mathcal{E}^{l} (\rho_0)]} .
\end{align}
We emphasize that this setting is valid only for finite $T$ since, otherwise, all $r_k = 0$, \ie\ the algorithm never enters the `return' branch in Eq.~\eqref{eq:intuition}. As a second example, we consider the more general case where $r_k R_k = c_k$ for coefficients $0< c_k < 1$. This can be achieved by the setting
\begin{align}
\label{eq:series_coefficients_app}
    r_k
    =
    \frac{c_k}{1 - \sum_{j=0}^{k-1} c_j} \text{ for } k \in [T+1]
    \quad \implies \quad
    \mathbb{E}[\rho_{}]
    =
    \frac{\sum_{k=0}^T c_k \mathcal{E}^k (\rho_0)}{\sum_{l=0}^T c_l \tr [\mathcal{E}^{l} (\rho_0)]} .
\end{align}
Since $r_T = 1$, we have the condition that $\sum_{k=0}^T c_k = 1$, \ie\ the coefficients must form a probability vector. If $\Set{c_k}$ also forms a convergent series then, unlike the previous example, $T$ can be taken to infinity.

\subsection{Expected stopping time (Lemma~\ref{lemma:stopping_time} and Corollary~\ref{cor:stopping_time})}
\label{app:exp_time}

For the expected stopping time $\mathbb{E}[\tau]$ of the algorithm one simply follows the derivation in~\cite[Lemma 7]{zhang2023dissipative} to get
\begin{align}
    \mathbb{E}[\tau]
    = \frac{ \sum_{k=0}^{T} R_k \tr[\mathcal{E}^k(\rho_0)] } {\sum_{l=0}^{T} r_l R_l \tr[\mathcal{E}^l(\rho_0)]}.
\end{align}
When the Bernoulli probabilities $r_k$ are given by Eq.~\eqref{eq:series_coefficients_app}, we have $R_k = 1 - \sum_{j=0}^{k-1} c_j$. Then
\begin{align}
    \mathbb{E}[\tau]
    =
    \frac{ \sum_{k=0}^{T} \left(1 - \sum_{j=0}^{k-1} c_j\right) \tr[\mathcal{E}^k(\rho_0)] } {\sum_{l=0}^{T} c_l \tr[\mathcal{E}^l(\rho_0)]}
    \leq
    \frac{ \sum_{k=0}^{T} 
    \tr[\mathcal{E}^k(\rho_0)] } {\sum_{l=0}^{T} c_l \tr[\mathcal{E}^l(\rho_0)]}
    \leq \frac{1}{\min \Set{c_l}},
\end{align}
where we use that $0 < c_k < 1$ and $0 < \tr[\mathcal{E}^k(\rho_0)] \leq 1$ for all $k$ by assumption, so we can ignore the term with the minus sign. It follows that for constant coefficients $c_k=\frac1{T+1}$ as in Eq.~\eqref{eq:constant_coefficients_app}, we have $\mathbb{E}[\tau] \leq T+1$. Intuitively, this is because the Bernoulli probabilities of entering the `return' branch in Eq.~\eqref{eq:intuition} are $r_k\ge \frac1{T+1}$.

\section{Approximation error for Lyapunov equations}

We start by the following claim, which we employ below in Appendices~\ref{app:proof_error_discrete}, \ref{app:proof_error_continuous}, and \ref{app:robustness}.
Let $\succeq$ denote the Loewner order.

\begin{claim}
\label{clm:normalized_semidef_dist}
For nonzero positive semidefinite $Y,Z$ of the same dimensions, we have
\begin{align}
    \frac{1}{2}\tracenorm*{\frac{Y}{\tr(Y)}-\frac{Z}{\tr(Z)}} \le \frac{\tracenorm*{Y-Z}}{\max\Set{\tr(Y), \tr(Z)}}.
\end{align}
\end{claim}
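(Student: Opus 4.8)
The plan is to reduce everything to the triangle inequality after a single algebraic rearrangement, exploiting two facts about positive semidefinite matrices: the trace norm of a PSD matrix equals its trace, and the trace of any matrix is bounded in absolute value by its trace norm. Write $a \coloneq \tr(Y)$ and $b \coloneq \tr(Z)$, both strictly positive by assumption. Without loss of generality I assume $a \geq b$, so that $\max\Set{a,b} = a$; the other case is symmetric under swapping $Y$ and $Z$.

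The key step is to insert the intermediate term $Z/a$ and split the difference as
\begin{align}
    \frac{Y}{a} - \frac{Z}{b}
    =
    \frac{Y-Z}{a} - Z\,\frac{a-b}{ab}.
\end{align}
Applying the triangle inequality for the trace norm then gives
\begin{align}
    \tracenorm*{\frac{Y}{a} - \frac{Z}{b}}
    \leq
    \frac{\tracenorm*{Y-Z}}{a} + \frac{a-b}{ab}\,\tracenorm{Z},
\end{align}
where I have used $a \geq b$ so that $(a-b)/(ab) \geq 0$ and the coefficient can be pulled out without an absolute value.

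To finish, I bound the second term. Since $Z \succeq 0$, its trace norm equals its trace, $\tracenorm{Z} = b$, so the coefficient $\tfrac{a-b}{ab}\tracenorm{Z}$ collapses to $\tfrac{a-b}{a}$. The remaining task is to control the numerator $a - b = \tr(Y) - \tr(Z) = \tr(Y-Z)$. Because the trace is bounded by the trace norm, $a - b \leq \abs{\tr(Y-Z)} \leq \tracenorm*{Y-Z}$. Substituting this in yields
\begin{align}
    \tracenorm*{\frac{Y}{a} - \frac{Z}{b}}
    \leq
    \frac{\tracenorm*{Y-Z}}{a} + \frac{\tracenorm*{Y-Z}}{a}
    =
    \frac{2\tracenorm*{Y-Z}}{a},
\end{align}
and dividing by $2$ gives the claim with $a = \max\Set{a,b}$ in the denominator.

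I expect the only subtle point to be the bookkeeping of the sign: the whole argument hinges on choosing the intermediate term with denominator equal to the \emph{larger} trace, so that the correction term carries a nonnegative coefficient and the matching $\max$ appears in the final bound. Everything else—the equality $\tracenorm{Z}=\tr(Z)$ for PSD $Z$ and the inequality $\abs{\tr(\cdot)} \leq \tracenorm*{\cdot}$—is standard, so there is no real obstacle beyond this choice.
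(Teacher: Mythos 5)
Your proof is correct and takes essentially the same route as the paper's: both pass through the intermediate term $Z/\tr(Y)$ (your algebraic identity is exactly that triangle-inequality split), use $\tracenorm{Z}=\tr(Z)$ for PSD $Z$, and bound $\tr(Y)-\tr(Z)\le\tracenorm{Y-Z}$. The only cosmetic difference is that you justify this last bound via $\abs{\tr(\cdot)}\le\tracenorm{\cdot}$, whereas the paper applies the triangle inequality to $\tracenorm{(Y-Z)+Z}$ — the same fact in different clothing.
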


\begin{proof}
Assume that $\tr(Y)\ge\tr(Z)$ (if not one can simply swap the roles of $Y$ and $Z$ in the following derivation).
Because $Y,Z\succeq0$, we have $\tracenorm*{Y}=\tr(Y)$ and $\tracenorm*{Z}=\tr(Z)$. By applying the triangle inequality to $\tracenorm{(Y-Z)+Z}$,
we get
\begin{align}
\label{eq:dist_between_semidef}
    \tr(Y)-\tr(Z) \le \tracenorm*{Y-Z},
\end{align}
which we will use in a moment. Applying the triangle inequality to the distance that we want to bound, we get
\begin{align}
    \frac{1}{2}\tracenorm*{\frac{Y}{\tr(Y)}-\frac{Z}{\tr(Z)}}
    & \le
    \frac{1}{2}\tracenorm*{\frac{Y}{\tr(Y)}-\frac{Z}{\tr(Y)}}
    + \frac{1}{2}\tracenorm*{\frac{Z}{\tr(Y)}-\frac{Z}{\tr(Z)}}
    \\ & \le
    \frac{\tracenorm*{Y-Z}}{2\tr(Y)}
    + \frac{\tracenorm*{Z}}{2}\abs*{\frac{1}{\tr(Y)}-\frac{1}{\tr(Z)}}
    \\ & \le
    \frac{\tracenorm*{Y-Z}+[\tr(Y)-\tr(Z)]}{2\tr(Y)}
\end{align}
where we use $\tr(Y)\ge\tr(Z)$ and, once again, $\tracenorm*{Z}=\tr(Z)$. The proof is concluded by applying Eq.~\eqref{eq:dist_between_semidef}.
\end{proof}

\subsection{Proof of Theorem~\ref{thm:discrete_lyapunov} (solution of the discrete-time Lyapunov equation)}
\label{app:proof_error_discrete}

Let us recap the setting of the theorem. We want to solve the discrete-time Lyapunov equation
\begin{align}
\label{eq:disc_time_lyapunov}
    A X A^\dag - X + B = 0 ,
\end{align}
where $B$ is positive semidefinite, $\tr(B)=1$, $[A, A^\dag] = 0$, and all eigenvalues of $A$ have magnitude strictly less than $1$ (the following are equivalent: (i) it is Schur stable, (ii) its spectral radius is less than $1$). The spectral radius of a normal matrix equals the operator norm.
Let us introduce the following notation for this section:
\begin{align}
\label{eq:sol_disc_time_lyapunov}
    \mathcal{S}(T) \coloneq \sum_{k=0}^{T} A^{k} B A^{k \dag}.
\end{align}
Note that $B = C C^\dag$ for some $C$, since $B$ is positive semidefinite. Then the summand in $\mathcal{S}(T)$ can be written in the form $(A^{k} C)(A^{k}C)^\dag$ and so it is also positive semidefinite.
It follows that $\mathcal{S}(T)$ itself is positive semidefinite and thus
\begin{align}
    & \tracenorm*{\mathcal{S}(T)} = \tr[\mathcal{S}(T)]  \\
    & \mathcal{S}(N) \succeq  \mathcal{S}(T) \text{ for } N \geq T  \\
    \label{eq:SInf_ST_norm}
    & \tracenorm*{ \mathcal{S}(\infty) - \mathcal{S}(T) } = \tr[\mathcal{S}(\infty)] - \tr[\mathcal{S}(T)].
\end{align}
Equation~\eqref{eq:disc_time_lyapunov} has a unique solution $X$ if and only if the
eigenvalues of $A$ satisfy $\lambda_i \lambda_j \neq 1$ for all $i$, $j$. This
is satisfied because we assume $\opnorm{A} < 1$. Furthermore, for $\opnorm{A} < 1$ the solution to Eq.~\eqref{eq:disc_time_lyapunov} is $X =
\mathcal{S}(\infty)$.
We can produce a normalized version of Eq.~\eqref{eq:disc_time_lyapunov} by using the probabilistic quantum algorithm with the following setup: the block encoding of $A$ is used to implement the map $\mathcal{E}(\cdot) = A \,\cdot \, A^\dag$, and the Bernoulli probabilities given in Eq.~\eqref{eq:constant_coefficients_app} are used to implement constant coefficients $c_k$. Then the expected output state is 
\begin{align}
\label{eq:lyapunov_algo_output}
     \mathbb{E}[\rho] = \frac{\sum_{k=0}^{T} A^k B A^{\dagger k}}{\tr (\sum_{l=0}^{T} A^l B A^{\dagger l})} =  \frac{\mathcal{S}(T)}{\tr[\mathcal{S}(T)]}.
\end{align}
To prove the theorem we find the value of the parameter $T$ such that the expected output state is $\epsilon$-close to the normalized solution in trace distance.

\begin{proof}[Proof of Theorem~\ref{thm:discrete_lyapunov}]
Using Claim~\ref{clm:normalized_semidef_dist} and, then, Eq.~\eqref{eq:SInf_ST_norm}, we get
\begin{align}
    \frac{1}{2}\tracenorm*{ \frac{X}{\tr[X]} - \mathbb{E}[\rho] } &= \frac{1}{2}\tracenorm*{ \frac{ \mathcal{S}(\infty) }{\tr[\mathcal{S}(\infty)]} - \frac{\mathcal{S}(T)}{\tr[\mathcal{S}(T)]} }
    \leq 
    \frac{ \tracenorm*{\mathcal{S}(\infty) - \mathcal{S}(T)}}{\tr[\mathcal{S}(\infty)]}
    = \frac{ \tr[\mathcal{S}(\infty)] - \tr[\mathcal{S}(T)]}{\tr[\mathcal{S}(\infty)]} .
\label{eq:derivation_2} 
\end{align}
For the numerator we find
\begin{align}
    \tr[\mathcal{S}(\infty)] - \tr[\mathcal{S}(T)]
    &= \sum_{k=T+1}^\infty \tr(A^k B A^{\dag k}) = \sum_{j=0}^\infty \tr(A^{j+T+1} B A^{\dag j+T+1}) \\
    &= \tr[(A^\dag A)^{T+1} \sum_{j=0}^\infty (A^\dag A)^j B] = \tr[(A^\dag A)^{T+1} (I - A^\dag A)^{-1} B] ,
\label{eq:derivation_3}
\end{align}
where we use the closed form of the Neumann series. Since $A^\dag A$ is positive semidefinite, we can write its spectral decomposition as $A^\dag A = \sum_i \alpha_i \dyad{i}$ where $0\le\alpha_i<1$ and $\Set{\ket{i}}_i$ is an eigenbasis of $A^\dag A$. Since $B$ is positive semidefinite, we have that $\beta_i=\innerp{i}{B}{i} \geq 0$. Thus we can continue as
\begin{align}
\label{eq:split_trace}
    \tr[\mathcal{S}(\infty)] - \tr[\mathcal{S}(T)]
    &=
    \sum_i \alpha_i^{T+1} \frac{\beta_i}{1-\alpha_i} \leq \max_j \Set*{\alpha_j^{T+1}} \sum_i \frac{\beta_i}{1-\alpha_i} \\
    &=
    \max\Set{\abs{\lambda_i}\given \lambda_i\in\lambda(A)}^{2(T+1)} \tr[(I - A^\dag A)^{-1} B],
\end{align}
with equality when $T=-1$, for which $\mathcal{S}(-1) = 0$. Recall that $A$ is normal so its spectral radius equals the operator norm. Then Eq.~\eqref{eq:derivation_2} is bounded as  
\begin{align}
\label{eq:upperbound_discrete}
    \frac{1}{2}\tracenorm*{ \frac{X}{\tr[X]} - \mathbb{E}[\rho] }
    \leq
    \frac{\tr[\mathcal{S}(\infty)] - \tr[\mathcal{S}(T)]}{\tr[\mathcal{S}(\infty)] - \tr[\mathcal{S}(-1)]}  
    \leq
    \opnorm{A}^{2(T+1)}
    \leq
    \epsilon,
\end{align}
where we introduced a parameter $0 < \epsilon < 1$. Assuming $\opnorm{A}\neq 0$ and solving for $T$ we conclude that
\begin{align}
\label{eq:lyapunov_time_and_error}
    T
    \geq
    \ceil*{ \frac{ \ln (\frac{1}{\epsilon}) } {2 \ln (\frac{1}{\opnorm{A}})} } \quad \implies \quad \frac{1}{2}\tracenorm*{ \frac{X}{\tr[X]} - \mathbb{E}[\rho] }
    \leq
    \epsilon.
\end{align}
The trivial case $\opnorm{A}=0$ has solution $X=B$, which is achieved by setting $T\geq 0$.
\end{proof}

\subsection{Proof of Theorem~\ref{thm:continuous_lyapunov} (solution of the continuous-time Lyapunov equation)}
\label{app:proof_error_continuous}

We consider the continuous-time Lyapunov equation
\begin{align}
\label{eq:cont_time_lyapunov_app}
    A X + X A^\dag + B = 0,
\end{align}
where $B$ is positive semidefinite, $\tr(B)=1$, $[A, A^\dag] =0$, and all eigenvalues of $A$ satisfy $\Re[\lambda(A)] < 0$ (Hurwitz stable).
Equation~\eqref{eq:cont_time_lyapunov_app} has a unique solution if and only if the eigenvalues of $A$ satisfy $\lambda_i + \lambda_j^* \neq 0$ for all $\lambda_i, \lambda_j\in\lambda(A)$. 
This is satisfied because we assume $\Re [\lambda(A)] < 0$ so the sum of the real parts of $\lambda_i + \lambda_j^*$ cannot be $0$. 
Furthermore, the unique solution is given by 
\begin{align}
\label{eq:target}
    X = \int_0^\infty e^{tA} B e^{tA^\dag}dt ,
\end{align}
which can be verified by plugging it back in Eq.~\eqref{eq:cont_time_lyapunov_app}.
We approximate this integral in the domain $[0, L]$  using the left Riemann sum. 
We discretize the interval using $\frac{L}{\Delta}$, which is assumed to be an integer, and intervals of size $\Delta>0$:
\begin{align}
\label{eq:riemann_sum_app}
    X \approx \Delta \sum_{k=0}^{\frac{L}{\Delta} - 1} e^{k \Delta A} B e^{k \Delta A^\dag} .
\end{align}
Suppose that we have a quantum state-preparation circuit for $B$ and a block-encoding circuit for $e^{\Delta A}$. We can implement the CP map $\mathcal{E} (\cdot) =  e^{\Delta A} \,\cdot \, e^{\Delta A^\dag}$ by executing the block-encoding circuit, measuring the ancilla, and postselecting. 
For the coefficients, we want to use Eq.~\eqref{eq:constant_coefficients_app} and $T = \frac{L}{\Delta} - 1$ so that the expected output state is
\begin{align}
\label{eq:master_state}
     \mathbb{E}[\rho] = \frac{\sum_{k=0}^{\frac{L}{\Delta} - 1} e^{k\Delta A} B e^{k\Delta A^\dag} }{\tr (\sum_{l=0}^{\frac{L}{\Delta} - 1} e^{l\Delta A} B e^{l\Delta A^\dag})} ,
\end{align}
which is the normalized version of Eq.~\eqref{eq:riemann_sum_app}. 
Let us introduce some notation for the calculations in this section. We define 
\begin{align}
    \mathcal{I}(L) \coloneq \int_0^L e^{tA} B e^{t A^\dag} dt
    \qquad \text{and} \qquad
    \mathcal{S}(L) \coloneq \Delta \sum_{k=0}^{\frac{L}{\Delta} -1} e^{k \Delta A} B e^{k \Delta A^\dag} .
\label{eq:IN_and_SN_definitions}
\end{align}
We are now ready to prove the theorem: we find the value of the parameters $L$ and $\Delta$ such that the expected output state is $\epsilon$-close to the normalized solution in trace distance. Since $B$ is positive semidefinite, we have $B = C C^\dag$ for some $C$. Then the integrand can be written in the form $(e^{tA} C)(e^{tA}C)^\dag$ and it is also positive semidefinite (similarly for the summation). It follows that both $\mathcal{I}(L)$ and $\mathcal{S}(L)$ are positive semidefinite, and, in turn, we have 
\begin{align}
\label{eq:property_a_alt}
    \tracenorm*{ \mathcal{I}(L) } = \tr[\mathcal{I}(L)] \qquad \text{and} \qquad  \tracenorm*{ \mathcal{S}(L) } = \tr[\mathcal{S}(L)].
\end{align}
Furthermore we have
\begin{align}
    \mathcal{I}(T) &\succeq \mathcal{I}(L) \text{ for } T\geq L,\\
    \mathcal{S}(T) &\succeq \mathcal{S}(L) \text{ for } T\geq L.
\end{align}

\begin{proof}[Proof of Theorem~\ref{thm:continuous_lyapunov}]
Let us denote $R\coloneq\max_k \Set{\Re( \lambda_k)}<0$.
Using the triangle inequality 
\begin{align}
    \frac{1}{2}\tracenorm*{\frac{X}{\tr(X)} - \mathbb{E}[\rho] }
    &=
    \frac{1}{2}\tracenorm*{\frac{ \mathcal{I}(\infty) }{\tr[\mathcal{I}(\infty)]} - \frac{\mathcal{S}(L)}{\tr[\mathcal{S}(L)]} } \\
    &\leq
    \underbrace{
    \frac{1}{2}\tracenorm*{\frac{ \mathcal{I}(\infty) }{\tr[\mathcal{I}(\infty)]} - \frac{\mathcal{S}(\infty)}{\tr[\mathcal{S}(\infty)]}}}_{\le\epsilon_1}
        +
    \underbrace{\frac{1}{2}\tracenorm*{\frac{ \mathcal{S}(\infty) }{\tr[\mathcal{S}(\infty)]} - \frac{\mathcal{S}(L)}{\tr[\mathcal{S}(L)]} }}_{\le\epsilon_2}.
    \label{eq:cont_tri_ineq}
\end{align}
The values $0 < \epsilon_1, \epsilon_2 < 1$ above are our desired bounds for these two distances.
The latter distance is bounded as in the discrete case, namely, as in the proof of Theorem~\ref{thm:discrete_lyapunov}. In particular, analogously to Eq.~\eqref{eq:upperbound_discrete}, we get
\begin{align}
    \frac{1}{2}\tracenorm*{\frac{ \mathcal{S}(\infty) }{\tr[\mathcal{S}(\infty)]} - \frac{\mathcal{S}(L)}{\tr[\mathcal{S}(L)]} }
    \le
    \opnorm{e^{\Delta A}}^{2[(L/\Delta)-1+1]}
    = e^{2RL}.
\end{align}
For this distance to be at most $\epsilon_2$, we can choose
\begin{align}
\label{eq:L_bound}
    L \geq \frac{\ln(1/\epsilon_2)}{- 2\max_j \Set{ \Re(\lambda_j) }} .
\end{align}
For the former distance, we apply Claim~\ref{clm:normalized_semidef_dist}, getting
\begin{align}
    \frac{1}{2}\tracenorm*{\frac{ \mathcal{I}(\infty) }{\tr[\mathcal{I}(\infty)]} - \frac{\mathcal{S}(\infty)}{\tr[\mathcal{S}(\infty)]} }
 \leq
    \frac{\tracenorm*{\mathcal{S}(\infty) - \mathcal{I}(\infty)}}{\tr[\mathcal{S}(\infty)]}.
\end{align}

\paragraph{Bounding numerator.}
We define the function $F(t)\coloneq e^{tA} B e^{tA^\dag}$ with derivative $F'(t)=e^{tA}A B e^{t A^\dag} + e^{tA}B A^\dag e^{t A^\dag}$.
Note that, for $t\ge \Delta l$, we have $F(t)=F(\Delta l)+\int_{\Delta l}^t F'(s)ds$.
We find
\begin{align}
    \tracenorm*{\mathcal{S}(\infty)-\mathcal{I}(\infty)}
    &=
    \tracenorm*{\Delta \sum_{l=0}^{\infty} e^{\Delta l A} B e^{\Delta l A^\dag} - \int_0^\infty e^{tA} B e^{t A^\dag} dt}\\
    &=
    \tracenorm*{\sum_{l=0}^{\infty} \int_{\Delta l}^{\Delta(l+1)} \Big( F(\Delta l) -  F(t) \Big) dt}\\
    &\leq
    \sum_{l=0}^{\infty} \int_{\Delta l}^{\Delta(l+1)} \Big(\int_{\Delta l}^t \tracenorm*{F'(s)} ds \Big) dt.
\end{align}
We note that
\begin{align}
    \tracenorm*{F'(s)}
    & \le
    \opnorm*{e^{s A}} \tracenorm*{A B + B A^\dag} \opnorm*{e^{s A^\dag}}\\
    & =
    e^{2s\max_k \Set{ \Re( \lambda_k)}} \tracenorm*{A B + B A^\dag}\\
    & \le
    2 \opnorm{A} e^{2Rs}, \label{eq:vn_trace_ineq_step}
\end{align}
since $\tracenorm{XYZ}\leq \tracenorm{X Y}\opnorm{Z}\leq
\opnorm{X}\tracenorm{Y}\opnorm{Z}$ by two applications of a corollary of von Neumann's trace inequality (follows from Ref.~\cite[Theorem
7.4.1.3(d)]{hornMatrixAnalysis2017}). In the last step we also use von Neumann's trace inequality and $\tracenorm*{B}=1$. We therefore get
\begin{align}
    \tracenorm*{\mathcal{S}(\infty)-\mathcal{I}(\infty)}
    & \le
    2 \opnorm{A} \sum_{l=0}^{\infty} \int_{\Delta l}^{\Delta(l+1)} \Big(\int_{\Delta l}^t e^{2Rs} ds \Big) dt
    \\
    & \le
    2 \opnorm{A} \sum_{l=0}^{\infty} e^{2R\Delta l}\int_{\Delta l}^{\Delta(l+1)} \left(t-\Delta l \right) dt
    \\
    & =
    \opnorm{A}\Delta^2 \sum_{l=0}^{\infty} e^{2R\Delta l}
    \\
    & =
    \frac{\opnorm{A}\Delta^2}{1-e^{2R\Delta}}.
\end{align}

\paragraph{Bounding denominator.}
We have
\begin{align}
    \tr[\mathcal{S}(\infty)]
    &=
    \Delta \sum_j B_{jj} \sum_{l=0}^{\infty} 
    e^{2\Delta l\Re(\lambda_j)}
    =
    \sum_j \frac{B_{jj}\Delta}{1-e^{2\Delta\Re(\lambda_j)}} 
    \\
    & \ge
    \min_j \Set*{ \frac{\Delta}{1-e^{2\Delta\Re(\lambda_j)}} }
    =
    \frac{\Delta}{1-\min_j \Set*{ e^{2\Delta\Re(\lambda_j)} }}
    =
    \frac{\Delta}{1-e^{2r\Delta}} ,
\end{align}
where we use $\tr(B)=1$ and introduce the notation $r\coloneq \min_j\Set{\Re(\lambda_j)}\le R < 0$.
Note that $\frac{1-e^y}{-y}$ is a positive strictly increasing function in $y$, which means that $\frac{1-e^{2r\Delta}}{-2r\Delta}\le \frac{1-e^{2R\Delta}}{-2R\Delta}$.
By combining the bound for the numerator and denominator, we get that 
\begin{align}
    \frac{1}{2}\tracenorm*{\frac{ \mathcal{I}(\infty) }{\tr[\mathcal{I}(\infty)]} - \frac{\mathcal{S}(\infty)}{\tr[\mathcal{S}(\infty)]} }
    \le \opnorm{A}\Delta \frac{1-e^{2r\Delta}}{1-e^{2R\Delta}}
    \le \opnorm{A}\Delta \frac{r}{R}.
\end{align}
For this distance to be at most $\epsilon_1$, we can choose
\begin{align}\label{eq:Delta_bound}
    \Delta \leq \frac{\epsilon_1}{\opnorm{A}}\frac{\max_j\Set{\Re(\lambda_j)}}{\min_k\Set{\Re(\lambda_k)}}.
\end{align}
Recall that $T=\frac{L}{\Delta}-1$. Combining our choices for $L$, Eq.~\eqref{eq:L_bound}, and $\Delta$, Eq.~\eqref{eq:Delta_bound}, we arrive at
\begin{align}\label{eq:continuous_lyapunov_time_bound}
    T
    &\geq
    \ceil*{ \frac{1}{2\epsilon_1} \ln( \frac{1}{\epsilon_2}) \frac{ \opnorm{A} \abs{\min_k \Set{ \Re(\lambda_k) }}}{ \abs{\max_j \Set{ \Re(\lambda_j) }}^2 } } .
\end{align}
Note that $\abs{\min_k \Set{ \Re(\lambda_k) }}\le\opnorm{A}$, with equality for Hermitian $A$.
In our use context, the factor $\opnorm{A} / \abs{\max_j\Set{\Re(\lambda_j)}}$ is similar to the condition number $\kappa$. Indeed if $A$ is Hermitian, and recalling that all its eigenvalues are negative, then $\abs{\max_j\Set{\Re(\lambda_j)}} = \min_j \Set{\abs{\lambda_j}}$, so that $T \geq \ceil*{ \frac{\kappa^2}{2\epsilon_1}\ln(\frac{1} {\epsilon_2}) }$.

\end{proof}

\subsection{Robustness}
\label{app:robustness}

In this section we discuss the effect of approximations when implementing the algorithm.
Conceptually, it is straightforward to include such error in the analysis using the triangle inequality. Let $\sim$ mark approximate quantities. For the discrete Lyapunov equation, Eq.~\eqref{eq:derivation_2} is replaced by 
\begin{align}
    \frac{1}{2} \tracenorm*{ \frac{X}{\tr(X)} - \mathbb{E}[\rho] }
    =
    \underbrace{\frac{1}{2}\tracenorm*{ \frac{ \mathcal{S}(\infty) }{\tr[\mathcal{S}(\infty)]} - \frac{\mathcal{S}(T)}{\tr[\mathcal{S}(T)]} }}_{\leq \epsilon} + \underbrace{\frac{1}{2}\tracenorm*{ \frac{ \mathcal{S}(T) }{\tr[\mathcal{S}(T)]} - \frac{\tilde{\mathcal{S}}(T)}{\tr[\tilde{\mathcal{S}}(T)]} }}_{\leq\epsilon_{\sim}},
\end{align}
where $T$ is still given by Eq.~\eqref{eq:lyapunov_time_and_error}, and we introduce a parameter $\epsilon_{\sim} \geq 0$ to bound the implementation error. 
In the continuous case, Eq.~\eqref{eq:cont_tri_ineq} is replaced by
\begin{align}
    \frac{1}{2}\tracenorm*{\frac{X}{\tr(X)} - \mathbb{E}[\rho] }
    \leq
    \underbrace{
    \frac{1}{2}\tracenorm*{\frac{ \mathcal{I}(\infty) }{\tr[\mathcal{I}(\infty)]} - \frac{\mathcal{S}(\infty)}{\tr[\mathcal{S}(\infty)]}}}_{\le\epsilon_1}
        +
    \underbrace{\frac{1}{2}\tracenorm*{\frac{ \mathcal{S}(\infty) }{\tr[\mathcal{S}(\infty)]} - \frac{\mathcal{S}(L)}{\tr[\mathcal{S}(L)]} }}_{\le\epsilon_2}
        +
    \underbrace{\frac{1}{2}\tracenorm*{ \frac{ \mathcal{S}(L) }{\tr[\mathcal{S}(L)]} - \frac{\tilde{\mathcal{S}}(L)}{\tr[\tilde{\mathcal{S}}(L)]} }}_{\leq \epsilon_{\sim}},
\end{align}
and $L=(T+1)\Delta$ is still given by the parameters in Theorem~\ref{thm:continuous_lyapunov}.

In the remainder of the section we inspect the newly added error term in order to obtain a more practical expression. We begin by relating the error term to the approximate implementation $\tilde{\mathcal{E}}$ of CP map $\mathcal{E}$. For the discrete-time Lyapunov equation we use Claim~\ref{clm:normalized_semidef_dist} to obtain
\begin{align}\label{eq:robustness_bound_1}
    \frac{1}{2}\tracenorm*{\frac{ \mathcal{S}(T) }{\tr[\mathcal{S}(T)]} - \frac{\tilde{\mathcal{S}}(T)}{\tr[\tilde{\mathcal{S}}(T)]} }
    \leq
    \frac{\tracenorm*{\mathcal{S}(T) - \tilde{\mathcal{S}}(T)}}{\tr[\mathcal{S}(T)]}
    =
    \frac{\tracenorm*{\sum_{k=0}^{T} \mathcal{E}^k(B) - \tilde{\mathcal{E}}^k(B)}}{\sum_{l =0}^{T}\tr[\mathcal{E}^{l}(B)]} \leq \sum_{k=0}^{T} \tracenorm*{\mathcal{E}^k(B) - \tilde{\mathcal{E}}^k(B)}.
\end{align}
For the denominator, we use that $\sum_{l=0}^{T}\tr[\mathcal{E}^{l}(B)] \geq \tr[\mathcal{E}^{0}(B)] = \tr(B) = 1$. In the continuous-time case the numerator and denominator of Eq.~\eqref{eq:robustness_bound_1} acquire a factor $\Delta$, which cancels, and we can also plug $T = L/\Delta -1$ to obtain an identical expression. Thus the following steps are valid for both continuous- and discrete-time cases.

Let us now consider the diamond norm of a linear map $\mathcal{A} : \mathbb{C}^{d\times d} \rightarrow \mathbb{C}^{d\times d}$, which is defined as $\diamondnorm{\mathcal{A}} = \max \Set{ \tracenorm{(\mathcal{A} \otimes I_d) (X) } \given X \in \mathbb{C}^{2d\times 2d}, \tracenorm{X} \leq 1 }$. The diamond norm is submultiplicative under composition of maps, $\diamondnorm{\mathcal{B} \circ \mathcal{A} } \leq \diamondnorm{ \mathcal{B}} \diamondnorm{\mathcal{A}}$, thus 
\begin{align}
    \diamondnorm{ \mathcal{B} \circ \mathcal{A} - \mathcal{D} \circ \mathcal{C} } &\leq \diamondnorm{\mathcal{B} \circ \mathcal{A} - \mathcal{B} \circ \mathcal{C} } + \diamondnorm{\mathcal{B} \circ  \mathcal{C} - \mathcal{D} \circ \mathcal{C} } \\
    & \leq \diamondnorm{\mathcal{B}}  \diamondnorm{\mathcal{A} - \mathcal{C} }  + \diamondnorm{\mathcal{B} - \mathcal{D}} \diamondnorm{\mathcal{C}} . \label{eq:subadd_diamond}
\end{align}
For trace nonincreasing maps, \eg\ $\tracenorm{ (\mathcal{B} \otimes I_d) (X)} \leq \tracenorm{X} \leq 1$, we have that $\diamondnorm{\mathcal{B}} \leq 1$. So Eq.~\eqref{eq:subadd_diamond} indicates subadditivity with respect to composition of trace nonincreasing maps. Using this fact recursively in Eq.~\eqref{eq:robustness_bound_1} we have
\begin{align}
    \sum_{k=0}^{T} \tracenorm*{\mathcal{E}^k(B) - \tilde{\mathcal{E}}^k(B)}
    \leq \sum_{k=0}^{T} \diamondnorm*{\mathcal{E}^k - \tilde{\mathcal{E}}^k}
    \leq \sum_{k=0}^{T} k \diamondnorm*{\mathcal{E} - \tilde{\mathcal{E}}}
    \leq \frac{T(T+1)}{2} \diamondnorm*{\mathcal{E} - \tilde{\mathcal{E}}}.\label{eq:robustness_bound_2}
\end{align}
In the second inequality we use $\tracenorm*{\mathcal{F}(X)}
\leq \diamondnorm*{\mathcal{F}}\tracenorm*{X}$.
Recall that our algorithm utilizes an approximate block encoding of the Kraus operator $M$ to produce an approximate implementation of the map $\mathcal{E}(\cdot) = M \,\cdot \, M^\dag$. Let us recap how this works. 

Suppose we have a block encoding $U_{\tilde{M}}$ such that $\tilde{M} = (\bra{G} \otimes I^{\otimes n}) U_{\tilde{M}} (\ket{G} \otimes I^{\otimes n})$ for a known state $\ket{G}$. Note the absence of subnormalization constants as these could bias the transition probabilities of our algorithm (in Appendix~\ref{app:block_encoding_exp} we show how to produce such block encoding in practice). We now attach the ancilla register $\dyad{G}$ to the system register $\rho$ and apply $U_{\tilde{M}}$, producing $U_{\tilde{M}}( \dyad{G} \otimes \rho)U_{\tilde{M}}^\dag$. We then perform a projective measurement on the ancilla register with elements $\Set{\dyad{G}, I - \dyad{G}}$. If the measurement outcome is $1$, we obtain $\dyad{G} \otimes \frac{\mathcal{\tilde{E}}(\rho)}{\tr[\mathcal{\tilde{E}}(\rho)]}$. Discarding the ancilla register, we obtain the desired approximation and the algorithm enters the `continue' branch in Eq.~\eqref{eq:intuition}. If the measurement outcome is $0$, the algorithm enters the `restart' branch. So how does the error in $M$ relates to the error in $\mathcal{E}$? We have that
\begin{align}
    \diamondnorm{\mathcal{E} - \tilde{\mathcal{E}}}
    &=
    \max_\rho \tracenorm{(\mathcal{E} \otimes I_d)(\rho) - (\tilde{\mathcal{E}} \otimes I_d)(\rho)}
    \\
    &=
    \max_\rho \tracenorm{ (M \otimes I_d) \rho (M \otimes I_d)^\dag - (\tilde{M} \otimes I_d) \rho (\tilde{M} \otimes I_d)^\dag}
    \\
    &=
    \max_\rho \tracenorm{ (M - \tilde{M} \otimes I_d) \rho (M \otimes I_d)^\dag + (M \otimes I_d) \rho (M - \tilde{M} \otimes I_d)^\dag - (M - \tilde{M} \otimes I_d) \rho (M - \tilde{M} \otimes I_d)^\dag}
    \\
    &\leq
    \max_\rho \left\{ \tracenorm{ (M - \tilde{M} \otimes I_d) \rho (M \otimes I_d)^\dag} + \tracenorm{(M \otimes I_d) \rho (M - \tilde{M} \otimes I_d)^\dag} + \tracenorm{(M - \tilde{M} \otimes I_d) \rho (M - \tilde{M} \otimes I_d)^\dag} \right\}
    \\
    &\leq
    \max_\rho \left\{ 2\opnorm*{M - \tilde{M} \otimes I_d } \tracenorm{\rho} \opnorm*{M \otimes I_d} + \opnorm*{M - \tilde{M} \otimes I_d}^2 \tracenorm{\rho} \right\}
    \\
    &\leq
    2\opnorm*{ M - \tilde{M} \otimes I_d } \opnorm*{M \otimes I_d} + \opnorm*{M - \tilde{M} \otimes I_d}^2
    \\
    &\leq
    2\opnorm*{ M - \tilde{M} } + \opnorm*{M - \tilde{M}}^2 .\label{eq:robustness_bound_3}
\end{align}
Here we use $\tracenorm{XYZ} \leq
\opnorm{X}\tracenorm{Y}\opnorm{Z}$ (see comment after Eq.~\eqref{eq:vn_trace_ineq_step}), $\opnorm*{X^\dag} = \opnorm{X}$, $\tracenorm{\rho}=1$ and $\opnorm*{X \otimes I_d}=\opnorm{X}$. In the final step we also use that $\opnorm{M} \leq 1$ because $M$ is the Kraus operator of a trace nonincreasing map. 

Putting together Eqs.~\eqref{eq:robustness_bound_1},~\eqref{eq:robustness_bound_2}, and~\eqref{eq:robustness_bound_3} we arrive at
\begin{align}
    \frac{1}{2}\tracenorm*{\frac{ \mathcal{S}(T) }{\tr[\mathcal{S}(T)]} - \frac{\tilde{\mathcal{S}}(T)}{\tr[\tilde{\mathcal{S}}(T)]} }
    \leq T(T+1)\left(\opnorm*{ M - \tilde{M} } + \frac{1}{2}\opnorm*{M - \tilde{M}}^2 \right) \leq \epsilon_\sim.
\end{align}
In the trivial case $T=0$ the algorithm never uses the CP map so the block-encoding error is irrelevant. When $T\neq 0$ instead, solving for the block-encoding error we see that $\opnorm*{ M - \tilde{M} } \leq \sqrt{ 1 + \frac{2 \epsilon_{\sim}}{T(T+1)}} - 1$ with $T$ given in Theorem~\ref{thm:discrete_lyapunov} for the discrete case and Theorem~\ref{thm:continuous_lyapunov} for the continuous case. This expression is rather complicated, so we make one further approximation. Noting that $x/4 \leq \sqrt{1 + x} - 1$ for $0 \leq x \leq 8$, we conclude that
\begin{align}\label{eq:robustness_bound_4}
    \opnorm*{ M - \tilde{M} } \leq \frac{\epsilon_\sim}{2T(T+1)} 
    \quad \implies \quad \frac{1}{2}\tracenorm*{\frac{ \mathcal{S}(T) }{\tr[\mathcal{S}(T)]} - \frac{\tilde{\mathcal{S}}(T)}{\tr[\tilde{\mathcal{S}}(T)]} }
    \leq
    \epsilon_\sim.
\end{align}

\subsection{Block encoding the Kraus operator for the continuous-time Lyapunov equation}
\label{app:block_encoding_exp}

In the previous section we showed how the implementation error is related to the block-encoding error of the Kraus operator $M$. Here we provide an explicit construction for the continuous-time Lyapunov equation, where we specifically need $M = e^{\Delta A}$ (matrix inversion II is solved via a reduction to the continuous-time Lyapunov equation and thus the  construction is very similar). For simplicity, we assume $A \prec 0$, $\Delta >0$, and that we are given a block encoding of $-A/\opnorm{A}$. That is, we have access to a circuit $U_A$ such that $-A/\opnorm{A} = (\bra{G} \otimes I^{\otimes n}) U_A (\ket{G} \otimes I^{\otimes n})$ for some state $\ket{G}$. We now construct the block encoding of $M$ using the quantum singular value transformation (QSVT)~\cite{gilyenQuantumSingularValue2019}.

QSVT is a technique to apply a polynomial to a block-encoded matrix with query complexity proportional to the degree of the polynomial. In its basic version, the polynomial must have a definite parity (even or odd) and be bounded, $|P(x)| \leq 1$ for $x \in [-1,1]$. Our application calls for the use of a polynomial approximation to $e^{-\beta x}$. This is discussed in Ref.~\cite{gilyenQuantumSingularValue2019} where, however, a subnormalization factor is used to ensure that the polynomial remains bounded. The subnormalization factor is related to the probability of success of the block encoding upon projective measurement of the ancilla register. The subnormalization factor can therefore bias the carefully designed transition probabilities in our algorithm. In the QSVT literature, the subnormalization factor is often removed using the amplitude amplification algorithm, but this requires quantum circuits that perform reflections with respect to the initial state. Due to the probabilistic nature of our algorithm, the system register is in an unknown mixed state at all times, and thus we cannot construct such reflections. In the following we present a solution that block encodes our operator $M$ without any subnormalization. 

Let us define $H \coloneq -A/\opnorm{A} \succ 0$ and $\beta \coloneq \Delta \opnorm{A}$, so that $e^{-\beta \abs{H}} = e^{\Delta A} = M$. The condition number is $\kappa \coloneq \kappa(H) = \kappa(A)$. We now seek a polynomial approximation to the function $f(x)=e^{-\beta \abs{x}}$, which is accurate in the interval $[\frac{1}{\kappa}, 1]$ where the spectrum of $H$ lies. We resort to the following result.

\begin{theorem}[Restatement of Theorem 68 in Ref.~\cite{gilyenQuantumSingularValue2019}, bounded polynomial approximation based on multiple local Taylor series]
\label{thm:poly_approx}
Let $J\in \mathbb{N}$, $(x_j,r_j,\mu_j)\in[-1,1]^J\times(0,2]^J\times(0,1]^J$, such that $x_j\colon j\in [J]$ is monotone increasing, and $\mu_j\leq r_j$ for all $j\in[J]$. Let $I\coloneq\bigcup_{j\in [J]}[x_j-r_j,x_j+r_j]$ be the union of the intervals $[x_j-r_j,x_j+r_j]$, and suppose that for all $i<j\in[J]$ such that $j-i\geq 2$ we have that $x_i+r_i < x_j-r_j$. 
Let $\mu=\min\left[\min_{j\in [J]} \mu_j,\min_{j\in [J-1]} |x_{j+1} - r_{j+1} - x_{j} - r_j |\right]$. Let $f:I+[-\frac{\mu}2,\frac{\mu}2]\rightarrow \mathbb{C}$, $B\in \mathbb{R}_+$ be such that for all $j\in[J]$ we have $f(x_j+x)=\sum_{k=0}^{\infty}a^{(j)}_k x^k$ for all $x\in [-r_j-\frac{\mu_j}2, r_j+\frac{\mu_j}2]$ 
and $\sum_{k=0}^{\infty}(r_j+\mu_j)^k |a^{(j)}_k|\leq B$. Let $\delta \in\!\left(0,\frac{1}{2BJ}\right]$, then there is an efficiently computable polynomial $P\in \mathbb{C}[x]$ of degree $\mathcal{O}[\frac{J}{\mu}\log\left(\frac{BJ}{\delta}\right)]$ such that
\begin{align}
\max_{x \in I} |f(x)- P(x)| &\leq \delta,\\
\max_{x \in [-1,1]} |P(x)| &\leq \max_{x \in I+[-\mu/2,\mu/2]} |f(x)|,\\
\max_{x \in [-1,1]\setminus \left(I+[-\mu/2,\mu/2]\right)} |P(x)| &\leq \delta.
\end{align}
\end{theorem}
Let us choose $J=2$, and set $(x_1, r_1, \mu_1) = (-1, 1 - \frac{1}{\kappa}, \frac{1}{2\kappa})$ and $(x_2, r_2, \mu_2) = (1, 1 - \frac{1}{\kappa}, \frac{1}{2\kappa})$. First, we Taylor expand the functions $f_1\colon (-\infty, 0) \rightarrow \mathbb{R}, x\mapsto e^{\beta x}$ around $x_1=-1$ and $f_2\colon (0, \infty) \rightarrow \mathbb{R}, x\mapsto e^{-\beta x}$ around $x_2=1$ (both functions are analytic on their respective open domains):
\begin{align}
    f_1(x)
    =
    \sum_{\ell=0}^\infty \frac{f_1^{(\ell)}(-1)}{\ell!} (x + 1)^\ell
    =
    \sum_{\ell=0}^\infty \frac{\beta^\ell e^{-\beta}}{\ell!} (x + 1)^\ell 
    &\quad\Leftrightarrow\quad
    f_1(-1+x)
    =
    \sum_{\ell=0}^\infty \frac{\beta^\ell e^{-\beta}}{\ell!} x^\ell \quad\text{for } x\in (-\infty, 1),\label{eq:f_1}\\
    f_2(x)
    =
    \sum_{\ell=0}^\infty \frac{f_2^{(\ell)}(1)}{\ell!} (x - 1)^\ell
    =
    \sum_{\ell=0}^\infty \frac{(-\beta)^\ell e^{-\beta}}{\ell!} (x - 1)^\ell &\quad\Leftrightarrow\quad f_2(1+x)
    =
    \sum_{\ell=0}^\infty \frac{(-\beta)^\ell e^{-\beta}}{\ell!} x^\ell \quad\text{for } x\in (-1, \infty).\label{eq:f_2}
\end{align}
Now we consider the intervals $[x_j - r_j - \frac{\mu_j}{2}, x_j + r_j + \frac{\mu_j}{2}]$ for $j=1, 2$:
\begin{align}
    I_1 &= \left[-2 +\frac{3}{4\kappa}, -\frac{3}{4\kappa}\right], \\
    I_2 &= \left[\frac{3}{4\kappa}, 2-\frac{3}{4\kappa} \right].
\end{align}
Our target function $f(x)=e^{-\beta\abs{x}}$ matches $f_1(x)$ for $x\in I_1 \subset (-\infty, 0)$ and $f_2(x)$ for $x\in I_2 \subset (0, \infty)$. Hence, we can express the target function using the expansions in Eqs.~\eqref{eq:f_1} and~\eqref{eq:f_2} on the intervals $I_1$ and $I_2$, respectively, as required by the theorem:
\begin{align}
    f(-1+x) = \sum_{\ell=0}^\infty \frac{\beta^\ell e^{-\beta}}{\ell!} x^\ell &\quad\text{for } x\in \left[-1+\frac{3}{4\kappa}, 1-\frac{3}{4\kappa}\right] \subset (-\infty, 1),\\
    f(1+x) = \sum_{\ell=0}^\infty \frac{(-\beta)^\ell e^{-\beta}}{\ell!} x^\ell &\quad\text{for } x\in \left[-1+\frac{3}{4\kappa}, 1-\frac{3}{4\kappa}\right] \subset (-1, \infty).
\end{align}
This defines the coefficients $a_\ell^{(1)} = \frac{\beta^\ell e^{-\beta}}{\ell!}$ and  $a_\ell^{(2)} = \frac{(-\beta)^\ell e^{-\beta}}{\ell!}$ for all $\ell\geq 0$, and the other parameters
\begin{align}
    &\mu = \frac{1}{2\kappa},\\
    &I = \left[-2 +\frac{1}{\kappa}, -\frac{1}{\kappa}\right] \cup \left[\frac{1}{\kappa}, 2-\frac{1}{\kappa} \right],\\
    &B= \sum_{\ell=0}^\infty \left(1 - \frac{1}{\kappa} + \frac{1}{2\kappa} \right)^\ell \left|\frac{(\mp\beta)^\ell  e^{-\beta}}{\ell!} \right| = e^{-\beta} \sum_{\ell=0}^\infty \frac{\beta^\ell(1 - \frac{1}{2 \kappa})^\ell}{\ell!} = e^{- \frac{\beta}{2\kappa}}. 
\end{align}
By the theorem, for $\delta\in(0, \frac{1}{4B}]$ there exists an efficiently computable polynomial such that
\begin{align}
    \max_{x \in I} \left| e^{-\beta |x|} - P(x) \right| \leq \delta,
    \qquad\qquad 
    \max_{x \in [-1, 1]} |P(x)| \leq e^{-\frac{3\beta}{4\kappa}} \leq 1.
\end{align}
Since $P$ is complex and not guaranteed to have a definite parity, it cannot be directly implemented by QSVT. Let us then take the real, even part of the polynomial, $Q(x) = [P(x) + P(-x) + P(x)^* + P(-x)^*]/4$. Noting that the target function is real and even, we can also write $f(x) = [f(x) + f(-x) + f(x)^* + f(-x)^*]/4$. Simple applications of the triangle inequality show that $\max_{x\in I} |f(x) - Q(x)| \leq \delta$, and that $\max_{x \in [-1, 1]} |Q(x)| \leq 1$. This polynomial can be implemented by QSVT using Theorem 17 and Corollary 10 in Ref.~\cite{gilyenQuantumSingularValue2019}. More specifically, we can construct a block encoding $U_{\tilde{M}}$ such that $\tilde{M} = Q(- A/\opnorm{A}) = (\bra{G} \otimes I^{\otimes n}) U_{\tilde{M}} (\ket{G} \otimes I^{\otimes n})$ with the same state $\ket{G}$ used in the block encoding $U_A$. By Theorem~\ref{thm:poly_approx} the query complexity, \ie\ the number of uses of $U_A$, is proportional to the degree of the polynomial $Q$. That is, $\mathcal{O}\left[\frac{J}{\mu}\log(\frac{BJ}{\delta})\right] = \mathcal{O}\left[\kappa \log (\frac{1}{\delta})\right]$.

Specifically choosing $\delta = \frac{\epsilon_\sim}{2T(T+1)}$ we arrive at $\opnorm{M - \tilde{M}} \leq \frac{\epsilon_\sim}{2T(T+1)}$ as required by Eq.~\eqref{eq:robustness_bound_4}. The total number of queries to $U_A$ is then $\mathcal{O}\left[\kappa \log (\frac{T}{\epsilon_\sim})\right]$. Note, however, that Theorem~\ref{thm:poly_approx} requires $\delta \leq \frac{1}{4B}$. Plugging our parameters in we find that the construction is valid for $\epsilon_\sim \leq \frac{T(T+1)}{2}e^{\frac{\Delta\opnorm{A}}{2\kappa}}$.

\section{Lower bound for the discrete-time algorithm}
\label{app:optimality_of_Tstar}

We can show that the choice of $T^* \coloneq  \ceil*{ \frac{1}{2}\ln (\frac{1}{\epsilon}) \big/ \ln (\frac{1}{\opnorm{A}}) }$ in Theorem~\ref{thm:discrete_lyapunov} is
asymptotically optimal. In particular, the following theorem implies that, if we had chosen $T^*< \frac{1}{4}\ln (\frac{1}{\epsilon}) \big/ \ln (\frac{10}{\opnorm{A}})$ in Theorem~\ref{thm:discrete_lyapunov}, then it would have been false.

\begin{theorem}
\label{thm:hardness_proof_discrete_time}
    For every $1/2\le \lambda<1$, there exists a normal matrix $A$ of operator
    norm $\lambda$ and a state $B$ such that the matrix
    $\mathcal{S}(T)$ defined in Eq.~\eqref{eq:sol_disc_time_lyapunov} satisfies
    \begin{align}
        \frac{1}{2}\tracenorm*{\frac{\mathcal{S}(\infty)}{\tr[\mathcal{S}(\infty)]}-\frac{\mathcal{S}(T)}{\tr[\mathcal{S}(T)]}}
        \ge \frac{\lambda^{4T}}{10}.
    \end{align}
\end{theorem}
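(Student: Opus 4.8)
The plan is to exhibit, for each $\lambda\in[1/2,1)$, an explicit low-dimensional instance and to evaluate the truncation error in closed form. Since the upper bound of Theorem~\ref{thm:discrete_lyapunov} decays like $\opnorm{A}^{2(T+1)}$, the target $\lambda^{4T}/10$ is deliberately loose: I only need an instance whose error stays of order $\lambda^{2T}$, and the factor-two gap in the exponent is exactly the ``asymptotic'' in asymptotic optimality. The crucial qualitative point is that a diagonal $B$ merely reweights the populations of $\mathcal{S}(T)$, and one checks that such reweighting cannot move the normalized state by trace distance $1/10$ at $T=0$ once $\lambda$ is near the lower end $1/2$. The instance must therefore carry coherence. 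I would take $A=\mathrm{diag}(\lambda,0)$, which is normal with $\opnorm{A}=\lambda$, and $B=\dyad{\psi}$ with $\ket{\psi}=\sqrt{b}\,\ket0+\sqrt{1-b}\,\ket1$ for a parameter $b$ to be fixed as a function of $\lambda$.

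First I would compute $\mathcal{S}(T)$ explicitly. Because the second eigenvalue is $0$, only the $k=0$ term contributes outside the top-left corner: the coherence is frozen, $(\mathcal{S}(T))_{01}=\sqrt{b(1-b)}$ and $(\mathcal{S}(T))_{11}=1-b$ for every $T\ge0$, while the dominant population grows geometrically, $(\mathcal{S}(T))_{00}=b\,(1-\lambda^{2(T+1)})/(1-\lambda^2)$. Writing $\Delta Z=1/\tr[\mathcal{S}(\infty)]-1/\tr[\mathcal{S}(T)]$, the difference of the two normalized states is a $2\times2$ traceless Hermitian matrix with top-left entry $(1-b)\abs*{\Delta Z}$ and off-diagonal entry of modulus $\sqrt{b(1-b)}\,\abs*{\Delta Z}$; hence its trace norm equals $2\sqrt{(1-b)^2+b(1-b)}\,\abs*{\Delta Z}=2\sqrt{1-b}\,\abs*{\Delta Z}$. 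This collapses to the clean identity
\begin{equation}
\frac12\tracenorm*{\frac{\mathcal{S}(\infty)}{\tr[\mathcal{S}(\infty)]}-\frac{\mathcal{S}(T)}{\tr[\mathcal{S}(T)]}}=\sqrt{1-b}\,\abs*{\Delta Z}=\frac{b\sqrt{1-b}\,\lambda^{2(T+1)}}{\bigl(1-(1-b)\lambda^2\bigr)\,\tr[\mathcal{S}(T)]},
\end{equation}
where in the last step I used $(1-\lambda^2)\tr[\mathcal{S}(\infty)]=1-(1-b)\lambda^2$.

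It then remains to show this exact quantity is at least $\lambda^{4T}/10$. Substituting $\lambda^{2(T+1)}=\lambda^2\lambda^{2T}$ and $\lambda^{4T}=(\lambda^{2T})^2$ and cancelling one factor $\lambda^{2T}$, the claim becomes the $T$-free inequality $10\,b\sqrt{1-b}\,\lambda^2\ge\bigl(1-(1-b)\lambda^2\bigr)\,\lambda^{2T}\tr[\mathcal{S}(T)]$. With $x:=\lambda^{2T}\in(0,1]$ the right-hand side is a concave quadratic $g(x)$ whose maximum I can locate in closed form: it sits at the boundary $x=1$ (that is, $T=0$, with $g(1)=1$) when $\lambda^2\le1/(1+b)$, and at the interior point $x^\star=\bigl(1-(1-b)\lambda^2\bigr)/(2b\lambda^2)$ otherwise. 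The whole theorem thus reduces to choosing $b=b(\lambda)$ so that $10b\sqrt{1-b}\,\lambda^2\ge\bigl(1-(1-b)\lambda^2\bigr)\max_{x\in(0,1]}g(x)$ for every $\lambda\in[1/2,1)$; concretely I would take $b$ of order one near $\lambda=1/2$ (where the binding case is $T=0$ and the constant $10$ is nearly saturated) and $b\to0$ as $\lambda\to1$ (so that $x^\star\ge1$ and the amplified slow mode is not over-weighted).

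I expect the main obstacle to be precisely this final uniform verification, and in particular the corner $\lambda\to1/2$ at small $T$. There the trace distance is genuinely of order $1/10$ rather than exponentially small, so the bound is tight and every simplification must be avoided; in particular one cannot bound $\tr[\mathcal{S}(T)]$ by $\tr[\mathcal{S}(\infty)]$, since that replaces the small value $\tr[\mathcal{S}(0)]=1$ by a larger one and pushes the achievable constant below $1/10$. This tightness is exactly what dictates the conservative constant $10$ and the loose exponent $4T$ in the statement: for large $T$ one has $\lambda^{2T}\gg\lambda^{4T}$ and the inequality holds with enormous room, so all of the difficulty is concentrated at the smallest values of $T$ and $\lambda$.
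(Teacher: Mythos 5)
Your route is genuinely different from the paper's and, as far as I can verify, sound. The paper's proof takes the maximally mixed $B=\tfrac12 I$ and $A=\sqrt\mu\dyad{0}+\sqrt\nu\dyad{1}$ with two \emph{nonzero} eigenvalues chosen close together ($\mu=1-\delta$, $\nu=1-\tfrac32\delta$, $\mu=\lambda^2$); the error then comes from the slowly separating ratio of two geometric populations, and the exponent $4T$ enters through $\nu\ge\mu^2=\lambda^4$. You instead freeze one mode entirely, $A=\mathrm{diag}(\lambda,0)$, and put the imbalance into a coherent pure $B=\dyad{\psi}$; your instance has error decaying like $\lambda^{2T}$, so it would actually establish a \emph{stronger} lower bound than the stated $\lambda^{4T}/10$, matching the upper bound $\opnorm{A}^{2(T+1)}$ from the proof of Theorem~\ref{thm:discrete_lyapunov} up to constants, at the price of delicate constant-chasing at the corner $\lambda=1/2$, $T=0$. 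Your computations check out: the frozen off-diagonal and $(1,1)$ entries, the trace norm $2\sqrt{1-b}\,\abs*{\Delta Z}$ of the traceless difference, the identity $(1-\lambda^2)\tr[\mathcal{S}(\infty)]=1-(1-b)\lambda^2$, and the reduction to the concave quadratic $g(x)=x\tr[\mathcal{S}(T)]$ in $x=\lambda^{2T}\in(0,1]$ with $g(1)=1$ and vertex $x^\star=\bigl(1-(1-b)\lambda^2\bigr)/(2b\lambda^2)$ are all correct. One caveat: your claim that the instance \emph{must} carry coherence holds only within your family $A=\mathrm{diag}(\lambda,0)$ (there, diagonal $B$ indeed tops out near $0.088<1/10$ at $\lambda=1/2$, $T=0$); the paper's own construction uses a diagonal $B$ with no coherence, which works because both eigenvalues of its $A$ are nonzero.

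The one genuine gap is the final step you explicitly defer: you never exhibit $b(\lambda)$ or verify the inequality uniformly, and this is not a formality, since the admissible window at $\lambda=1/2$ is narrow (roughly $b\in(0.47,0.75]$; e.g.\ $b=0.4$ gives distance $\approx 0.091<1/10$ at $T=0$). The gap is closable exactly along the lines you sketch: take $b(\lambda)=\min\Set{1/2,\,(1-\lambda^2)/\lambda^2}$ and write $s=\lambda^2$. For $s\le 2/3$ you have $b=1/2$ and $x^\star\ge1$, so the requirement is $10b\sqrt{1-b}\,s\ge 1-(1-b)s$, i.e.\ $\tfrac{5}{\sqrt2}s\ge 1-\tfrac{s}{2}$, i.e.\ $s\ge \tfrac{2}{1+5\sqrt2}\approx 0.2478$, which holds since $s\ge 1/4$; at the binding corner $\lambda=1/2$, $T=0$ the distance is exactly $\tfrac{1}{7\sqrt2}\approx 0.1010\ge \tfrac{1}{10}$, confirming your tightness diagnosis. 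For $s>2/3$, the choice $b=(1-s)/s$ gives $1-(1-b)s=2(1-s)$ and $x^\star=1$, so $\max_{x\in(0,1]}g(x)=g(1)=1$ and the requirement reduces to $10(1-s)\sqrt{1-b}\ge 2(1-s)$, i.e.\ $\sqrt{1-b}\ge 1/5$, comfortable since $b<1/2$. With this choice inserted, your argument is complete and proves the theorem (indeed with the better decay rate $\lambda^{2T}$ for your specific instance).
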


Note that Theorem~\ref{thm:hardness_proof_discrete_time} does not rule out the existence of an algorithm for solving the discrete-time Lyapunov equation that is more efficient than Algorithm~\ref{alg:cap}. That is, an algorithm that produces a state $\rho$ satisfying $\frac{1}{2} \tracenorm*{\frac{X}{\tr(X)} - \rho} \leq \epsilon$ while consuming less copies of the initial state $\rho_0=B$ and with less calls to the block encoding of $A$.

\begin{proof}[Proof of Theorem~\ref{thm:hardness_proof_discrete_time}]
We choose the maximally mixed state for a single qubit, $B = \frac{1}{2}(\dyad{0}+\dyad{1})$, and
\begin{align}
    A \coloneq \sqrt\mu\dyad{0}
    +\sqrt\nu\dyad{1} ,
\end{align}
with $0<\nu<\mu<1$ and $\mu=\lambda^2$. Note that the operator norm of $A$ is $\lambda$, as required. Eventually, after 
Eq.~\eqref{eq:rhoT_rhoInf_lower}, we will effectively choose $\nu\coloneq (3\lambda^2-1)/2$.
For $A$ and $B$ chosen as above, from Eq.~\eqref{eq:sol_disc_time_lyapunov} we see that 
\begin{align}
    \mathcal{S}(T)
    =
    \frac{1}{2} \sum_{k=0}^{T} \left(\mu^k\dyad{0}
    +\nu^k\dyad{1}\right)
    =
    \frac{1}{2}  \left(\frac{1-\mu^{T+1}}{1-\mu}\dyad{0}
    +\frac{1-\nu^{T+1}}{1-\nu}\dyad{1}\right).
\end{align}
For brevity, define $\eta(T)\coloneq\mathcal{S}(T)/\tr[\mathcal{S}(T)]$, so we have
\begin{align}
    \eta(T)  
    = 
    \frac{
    \frac{1-\mu^{T+1}}{1-\mu}\dyad{0}
    +\frac{1-\nu^{T+1}}{1-\nu}\dyad{1}
    }{
    \frac{1-\mu^{T+1}}{1-\mu}
    +\frac{1-\nu^{T+1}}{1-\nu}
    }
    = 
    \frac{\dyad{0}}{
    1+\frac{1-\mu}{1-\nu}\frac{1-\nu^{T+1}}{1-\mu^{T+1}}}
    +
    \frac{\dyad{1}}{
    1+\frac{1-\nu}{1-\mu}\frac{1-\mu^{T+1}}{1-\nu^{T+1}}}.
\end{align}
Because $\eta(T)-\eta(\infty)$ is an Hermitian matrix of trace $0$ and it is diagonal in the basis $\Set{\ket{0}, \ket{1}}$, its diagonal entries have the same absolute value yet opposite signs. Hence, by symmetry we have
\begin{align}
    \frac{\tracenorm*{\eta(T)-\eta(\infty)}}{2}
    & =
    \abs*{\innerp{0}{[\eta(T)-\eta(\infty)]}{0}}
    \\
    &
    = 
    \frac1{1+\frac{1-\mu}{1-\nu}}
    -\frac1{1+\frac{1-\mu}{1-\nu}\frac{1-\nu^{T+1}}{1-\mu^{T+1}}}
    \\
    & = 
    \frac{\frac{1-\mu}{1-\nu}\left(\frac{1-\nu^{T+1}}{1-\mu^{T+1}}-1\right)}{\left(1+\frac{1-\mu}{1-\nu}\right)\left(1+\frac{1-\mu}{1-\nu}\frac{1-\nu^{T+1}}{1-\mu^{T+1}}\right)}
    \\
    & = 
    \frac{(1-\mu)(1-\nu)(\mu^{T+1}-\nu^{T+1})}{(1-\nu+1-\mu)\left[(1-\nu)(1-\mu^{T+1})+(1-\mu)(1-\nu^{T+1})\right]}.
\end{align}
Let $0<\delta\le 1/2$ be such that $\mu=1-\delta$ and, in turn, $c>1$ be such that $\nu=1-c\delta$. Hence we have
\begin{align}
    \frac{\tracenorm*{\eta(T)-\eta(\infty)}}{2}
    & =  
    \frac{c\delta^2\left[(1-\delta)^{T+1}-(1-c\delta)^{T+1}\right]}{(c\delta+\delta)\left\{c\delta[1-(1-\delta)^{T+1}]+\delta[1-(1-c\delta)^{T+1}]\right\}}
    \\ & =  
    \label{eq:exact_rhoT_rhoInf}
    \frac{c\left[(1-\delta)^{T+1}-(1-c\delta)^{T+1}\right]}{(c+1)\left[c+1-c(1-\delta)^{T+1}-(1-c\delta)^{T+1}\right]}.
\end{align}
For bounding the numerator of Eq.~\eqref{eq:exact_rhoT_rhoInf}, we will use
\begin{align}
    (1-\delta)^{T+1}-(1-c\delta)^{T+1}
    & =
    [(1-\delta)-(1-c\delta)] \sum_{k=0}^{T}(1-\delta)^{k}(1-c\delta)^{T-k}
    \\
    & \ge
    (c-1)\delta\sum_{k=0}^{T}(1-c\delta)^{k}(1-c\delta)^{T-k}
    =
    (c-1)\delta (T+1) (1-c\delta)^{T}.
\end{align}
Above, the former equality is analogous to the equality $1-x^{T+1}=(1-x)\sum_{k=0}^{T}x^k$ concerning the finite geometric series where we have taken $x\coloneq\frac{1-c\delta}{1-\delta}$. 
For bounding the denominator of Eq.~\eqref{eq:exact_rhoT_rhoInf}, we will use
\begin{align}
    c(1-\delta)^{T+1}+(1-c\delta)^{T+1}
    & \ge c[1-\delta (T+1)]+[1-c\delta(T+1)]
    = c+1-2c\delta (T+1).
\end{align}
By incorporating the two inequalities, we get
\begin{align}
    \label{eq:rhoT_rhoInf_lower}
    \frac{\tracenorm*{\eta(T)-\eta(\infty)}}{2}
    \ge  
    \frac{c(c-1)\delta (T+1) (1-c\delta)^{T}}{(c+1)[c+1-c-1+2c\delta (T+1)]}
    = 
    \frac{(c-1)(1-c\delta)^{T}}{2(c+1)}.
\end{align}
We choose $c\coloneq\frac{3}{2}$, for which we have
$\nu=1-c\delta\ge(1-\delta)^2=\mu^2=\lambda^4$ because $\delta\le 1/2$. Thus, we
get
\begin{align}
    \frac{\tracenorm*{\eta(T)-\eta(\infty)}}{2}
    \ge   
    \frac{\left(\frac{3}{2}-1\right)\lambda^{4T}}{2\left(\frac{3}{2}+1\right)}
    =
    \frac{\lambda^{4T}}{10},
\end{align}
as claimed.
\end{proof}

\section{Estimating expectation values, overlaps, and normalization constants}\label{app:expectation_values}

Here we discuss how to read-off information from the output state of the quantum algorithm. Let us assume $\rho_{}$ is a state generated by the probabilistic algorithm, \eg\ encoding the solution to a Lyapunov equation, or a matrix inverse. It is straightforward to measure the expectation $\tr(\rho_{} O)$ of simple observables, such as Pauli matrices $O \in \Set{I, \sigma_X, \sigma_Y, \sigma_Z}^{\otimes n}$ or a linear combination thereof $O = \sum_i a_i O_i$ where $a_i \in \mathbb{R}$.

We can also interpret $\rho_{}$ itself as an observable. The task is then to estimate the expectation $\innerp{\psi}{\rho_{}}{\psi}$ with respect to another state $\ket{\psi}$. If we know the state-preparation circuit $U\ket{0} = \ket{\psi}$ then we can simply prepare $U^\dag \rho_{} U$ and measure the overlap with the zero state. If we do not know $U$, but have access to copies of $\ket{\psi}$ then we can use the well-known SWAP test as follows. We start the computation from $\dyad{+} \otimes \rho_{} \otimes \dyad{\psi}$, where the first register is an ancilla qubit in state $\ket{+}=\frac{1}{\sqrt{2}}(\ket{0} + \ket{1})$. We apply a circuit $C_S$ that swaps the second and third registers if the ancilla is in state $\ket{1}$. Finally, we measure the $\sigma_X \otimes I \otimes I$ observable. It can be verified that
\begin{align}
    \tr[ (\sigma_X \otimes I \otimes I) C_S (\dyad{+} \otimes \rho_{} \otimes \dyad{\psi}) C_S^\dag ] = \innerp{\psi}{\rho_{}}{\psi}. 
\end{align}

It is also possible to estimate matrix entries $\innerp{\phi}{\rho_{}}{\psi}$. If we know the state-preparation circuits $U\ket{0} = \ket{\psi}$ and $V\ket{0} = \ket{\phi}$ we can use the well-known Hadamard test as follows. We start from $\dyad{+} \otimes \rho_{}$ and use a circuit $C_{U,V}$ that applies $U^\dag$ if the ancilla is in state $\ket{1}$, and $V^\dag$ if it is in state $\ket{0}$. Finally, we measure $\sigma_X \otimes \dyad{0}$. It can be verified that
\begin{align}
    \tr[ (\sigma_X \otimes \dyad{0}) C_{U,V} (\dyad{+} \otimes \rho_{}) C_{U,V}^\dag ] = \Re\innerp{\phi}{\rho_{}}{\psi}. 
\end{align}
The imaginary part is obtained by measuring $\sigma_Y \otimes \dyad{0}$ instead.

Another interesting task is the estimation of the normalization constant, \ie\ the denominator in Eq.~\eqref{eq:expected_rho_app}. This quantity can be interpreted as the probability that the  algorithm stops without ever performing a restart (see Fig.~\ref{fig:algorithm_diagram} for intuition). This fact is used in Ref.~\cite[Proposition 11]{zhang2023dissipative} to estimate the partition function of Gibbs states. More generally, suppose we execute the algorithm $n_\mathrm{runs}$ times and keep track of the total number of restarts $n_\mathrm{restarts}$. Then
\begin{align}
\label{eq:norm_const_estimation}
    \sum_{k=0}^{T} r_k R_k \tr[\mathcal{E}^k(\rho_0)] = \sum_{k=0}^T \mathbb{P}(\text{stop at iteration } k \mid \text{algorithm not restarted}) \approx \frac{n_\mathrm{runs}}{n_\mathrm{runs} + n_\mathrm{restarts}} . 
\end{align}

As an example of application, consider the estimation of $\tr(X)$ where $X$ is the solution to the discrete-time Lyapunov equation $AXA^\dag - X +B=0$. From Theorem~\ref{thm:discrete_lyapunov} the expected output state of the algorithm is $\epsilon$-close in trace distance to the normalized solution. From Appendix~\ref{app:proof_error_discrete} we have that the outputs state is normalized by $\sum_{k=0}^{T} r_k R_k \tr[\mathcal{E}^k(\rho_0)] = \tr[\mathcal{S}(T)]$, and thus we can use Eq.~\eqref{eq:norm_const_estimation} as an estimator. Inspecting Eq.~\eqref{eq:upperbound_discrete} in the proof of the theorem we find an upper bound on the relative error of the normalization constant:
\begin{align}
    \frac{1}{2}\tracenorm*{ \frac{X}{\tr(X)} - \mathbb{E}[\rho] }
    \leq
    \underbrace{\frac{\tr[\mathcal{S}(\infty)] - \tr[\mathcal{S}(T)]}{\tr[\mathcal{S}(\infty)]}}_{
    \textstyle
    \left| \frac{\tr(X) - \tr[\mathcal{S}(T)] }{\tr(X)} \right| \; \text{(relative error)}
    } \leq
    \opnorm{A}^{2(T+1)}
    \leq
    \epsilon .
\end{align}
Thus the relative bias of the estimator decreases exponentially in $T$.

\section{Comparison of algorithms for the continuous-time Lyapunov equation}
\label{app:comparison_of_algos}

Here we discuss four quantum algorithms  for solving the continuous-time Lyapunov equation. The algorithms are highly heterogeneous with respect to the assumptions on the problem. To make the comparison possible we focus on the problem instance $AX + XA^\dagger + B = 0$ where $A, B \in \mathbb{C}^{N\times N}$, $A$ is negative definite with $\opnorm{A}=1/2$, and $B$ is positive semidefinite with $\tr[B]=1$.
Furthermore, the algorithms use different query access models and produce output in different formats. Thus their runtime/query complexity comparison is for informational purposes and should not be interpreted as conclusive.
Table~\ref{tab:comparison} summarizes the algorithms in terms of input, output, primitives and runtime or query complexity, where the $\tilde{\mathcal{O}}$ notation suppresses polylogarithmic factors and $\kappa$ denotes the condition number of $A$. We did not include~\cite[Lemma 5]{Liu_2025}, which can solve the Riccati equation $AX + X^\dag A^\dag +B = X^\dag D X$ when $D,B \succ 0$ and $D^{-1}A$ is Hermitian. To solve our problem instance we would need $D=0$, which is not allowed.

\begin{table}[ht]
\small
\centering
\begin{talltblr}[
    note{a} = {Includes amplitude amplification. Could be improved by replacing HHL with recent quantum linear system solvers.},
    note{b} = {Success probability $p$ not included. Boosting success probability to $1$ would lead to an additional factor $\mathcal{O}(1/p)$ from amplitude amplification.},
    note{c} = {For some real constant $\alpha \neq 0$.},
    entry=none,
    label=none
]{
    width = \linewidth,
    colspec = {l*{4}{X[c]}}, 
    hline{1,Z} = {0.8pt},    
    hline{2} = {0.4pt},      
    row{1} = {font=\bfseries},
}
    Algorithm
    & Input
    & Output
    & Primitives
    & Runtime or query complexity
    \\
    \cite{sunSolvingLyapunovEquation2017}
    & $s$-sparse matrix oracle for $A$, state preparation of $\ket{\text{vec}(-B)}$
    & Pure state $\ket{\tilde x}$ with $\lnorm{\ket{\tilde x} - \ket{\vecop(X)}} \leq \epsilon$
    & HHL
    & $\tilde{\mathcal{O}}\left(\frac{\log(N) s^2 \kappa^2}{\epsilon}\right)$\TblrNote{a}
    \\
    \cite[Theorem 4]{claytonDifferentiableQuantumComputing2024}
    & Block encoding of $A$, $B$
    & $(\opnorm{B}, \epsilon)$ block encoding of $X$
    & LCU, imaginary-time Hamiltonian simulation
    & $\tilde{\mathcal{O}}\left(\sqrt{\frac{\opnorm{B} \kappa^5}{\epsilon}} \right)$\TblrNote{b}
    \\
    \cite[Theorem 1]{Somma_2025} 
    & Block encoding of $A$, $B$
    & $(\alpha\kappa\opnorm{B}\log(\frac{\kappa\opnorm{B}}{\epsilon}), \epsilon)$ block encoding of $X$\TblrNote{c}
    & LCU, real-time Hamiltonian simulation
    & $\mathcal{O}\left(\kappa \polylog\left(\frac{\kappa \opnorm{B}}{\epsilon}\right)\right)$\TblrNote{b}
    \\
    This work
    & Block encoding of $A$, state preparation for $B$
    & Mixed state $\mathbb{E}[\rho]$ with $\tracenorm*{\mathbb{E}[\rho] - \frac{X}{\tr[X]}} \leq \epsilon$
    & Conditional restarts, imaginary-time Hamiltonian simulation
    & $\mathcal{O}\left(\frac{\kappa^3}{\epsilon} \log(\frac{1}{\epsilon}) \log(\frac{\kappa}{\epsilon}) \right)$
    \\
\end{talltblr}
\begingroup
\setlength{\abovecaptionskip}{6pt}
\caption{\textbf{Quantum algorithms for the continuous-time Lyapunov equation $AX + XA + B =0$.} For all algorithms we assume negative definite $A$ with $\opnorm{A}=1/2$ and $B$ positive semidefinite with $\tr[B]=1$. A $(\alpha, \epsilon)$ block encoding $U_X$ of $X$ means $\opnorm{X - \alpha\Pi U_X\Pi} \leq \epsilon$ for a suitable projector $\Pi$. Query complexity is in terms of access to the stated input oracles.}
\label{tab:comparison}
\endgroup
\end{table}

\subsection{Sun and Zhang\texorpdfstring{~\cite{sunSolvingLyapunovEquation2017}}{}}

\vspace{-5pt}

The algorithm in Ref.~\cite{sunSolvingLyapunovEquation2017} prepares a pure quantum state proportional to the solution of the vectorized Lyapunov equation $(A\otimes I + I\otimes A) \vecop(X) = \vecop(-B)$, where $I$ is the $N\times N$ identity and $\vecop(B)$ stacks the columns of $B$. Their input consists of a quantum circuit that prepares the state $\ket{\vecop(-B)} = \vecop(B)/\lnorm{\vecop(B)}$ and an oracle for $(A\otimes I + I\otimes A)$. Assuming $A$ is also $s$-sparse, the latter oracle can be implemented with a constant number of queries to a sparse oracle for $A$. Their work then applies the HHL algorithm, which is a quantum linear system solver, to approximate the normalized state $\ket{\vecop(X)}$. This state is a normalized solution to the equation up to error $\epsilon$ in $\ell_2$ distance. The runtime of HHL is $\tilde{\mathcal{O}}(\log(N) s^2\kappa^2/\epsilon)$, which, however, could be improved by using modern quantum linear system solvers.

\vspace{-15pt}

\subsection{Clayton et al.\texorpdfstring{~\cite{claytonDifferentiableQuantumComputing2024}}{}}

\vspace{-5pt}

The algorithm in Ref.~\cite{claytonDifferentiableQuantumComputing2024} constructs a block encoding of the solution $X$. It begins by discretizing the integral solution to the continuous-time Lyapunov equation into a sum. Recall from our main text that the $k$-th term of the sum is of the form $e^{k \Delta A} B e^{k \Delta A^\dag}$. In Ref.~\cite{claytonDifferentiableQuantumComputing2024} a block-encoding oracle for $B$ is provided by assumption, while block encodings for $e^{k \Delta A}$ and $e^{k \Delta A^\dag}$ shall be produced by imaginary-time Hamiltonian simulation methods. It is worthwhile to note that the product of block encodings is not the block encoding of the products in general, $U_{e^{k\Delta A}} U_B U_{e^{k \Delta A^\dag}} \neq U_{e^{k\Delta A} B e^{k \Delta A^\dag}}$, so achieving this will require additional ancilla registers~\cite[Lemma 53]{gilyenQuantumSingularValue2019}. Clayton et al. then take a linear combination of unitaries (LCUs) with all the terms in the sum. The final result is obtained upon postselecting all the ancilla registers on the $\ket{0}$ state, which introduces a probability success. We do not include this factor in the complexity in Table~\ref{tab:comparison}. Assuming the best possible normalization factor for the block encodings of $B$ and $A$, \ie\ their operator norm, the algorithm outputs an approximation to $X/\opnorm{B}$. For the query complexity of the algorithm we refer to the proof in Ref.~\cite[Theorem 4]{claytonDifferentiableQuantumComputing2024}, which gives $\tilde{\mathcal{O}}(\sqrt{\opnorm{B} \kappa^5 /\epsilon})$.

\vspace{-8pt}

\subsection{Somma et~al.\texorpdfstring{~\cite{Somma_2025}}{}}

\vspace{-5pt}

The algorithm in Ref.~\cite{Somma_2025} constructs a block encoding of the solution $X$ using an LCU and real-time Hamiltonian simulation as the main primitives. Their algorithm solves the more general problem $A'X + XB' = C'$. To specialize to our problem instance we set $A'=-A$, $B'=-A^\dagger=-A$, $C'=B$. 
Inputs of the algorithm are block encodings of the matrices $A$ and $B$. 
The output is a quantum circuit $U_{X/x}$ for a block encoding of the normalized solution matrix $X/x$ with $x\geq\kappa\opnorm{B}$, $Q=-A\otimes I - I\otimes A$. The vectorized form of the Lyapnuov equation reads $\vecop(X) = Q^{-1}\vecop(B)$. Now Somma~et~al. consider an integral representation of $Q^{-1}$ and express $X$ in terms of this integral after reverting the vectorization of the Lyapunov equation. Truncating and discretizing the integrals results in an LCU applied to the block encoding of $B$. The unitaries in the LCU are Hamiltonian simulations of the form $e^{-i\omega t A}$ with real $\omega$. Reference~\cite{Somma_2025} considers an implementation of the Hamiltonian simulation via QSVT to error $\opnorm{\Pi U_{X/x}\Pi - X/x} \leq \epsilon$. Note that $Q\succ 0$ and we assume access to a block encoding of $A$ for comparable input models. Hence, we use Theorem~1, Case~2~\cite{Somma_2025} to achieve query complexity $\mathcal{O}[\kappa\polylog(\kappa \opnorm{B}/ \epsilon)]$ for $x \propto \kappa\opnorm{B}\log(\kappa\opnorm{B}/\epsilon)$. 
To use the block encoding $U_{X/x}$ in a concrete task, such as preparing a state $\propto X\ket{\psi}$ or computing matrix elements $\innerp{\phi}{X}{\psi}$, requires additional queries to the oracle owing to the success probability $p<1$ of preparing the block encoding. This leads to an additional factor of either $\mathcal{O}(1/\sqrt{p})$ if amplitude amplification is available or $\mathcal{O}(1/p)$ if the state preparation is repeated upon the wrong postselection outcome. We do not include this factor in the complexity in Table~\ref{tab:comparison}.
Note that their Theorem~1, Case~4~\cite{Somma_2025} for $Q\succ 0$ requires access to a block encoding of $\sqrt{A}$ but gives improved query complexity. This assumption would also improve the query complexity of our algorithm but we do not consider it here.

\vspace{-15pt}

\subsection{Our work}

\vspace{-5pt}

Consider our algorithm for the continuous-time Lyapunov equation. Recall that we have three additive error terms in our algorithm: discretization of integral $\epsilon_1$, finite limit of integration $\epsilon_2$, and approximate implementation error $\epsilon_{\sim}$. Suppose we want a total error $\epsilon$, so we set each individual term to $\epsilon/3$. This requires a block encoding of $M=e^{\Delta A}$. As discussed in Appendix~\ref{app:block_encoding_exp}, given access to a block encoding of $-A$ we can construct a suitable approximate block encoding with $\mathcal{O}\left[\kappa \log (T/\epsilon)\right]$ queries at each step. From Theorem~\ref{thm:continuous_lyapunov} our algorithm uses parameters $T = \frac{3\kappa^2}{2\epsilon} \log(\frac{3}{\epsilon})$ and $\Delta = \frac{\epsilon}{3\kappa}$. 
By Corollary~\ref{cor:stopping_time} our algorithm is expected to run for $T+1$ steps. Thus we expect to require at most
\begin{equation}
    \mathcal{O}\left[T\kappa\log(\frac{T}{\epsilon})\right]
    =
    \mathcal{O}\left[\frac{\kappa^3}{\epsilon} \log(\frac{1}{\epsilon}) \log(\frac{\kappa}{\epsilon}) \right]
\end{equation}
queries to the block encoding of $-A$. Here we dropped $\log \log(1/\epsilon)$ terms.

\twocolumngrid
\bibliography{main}

\begin{thebibliography}{27}%
\makeatletter
\providecommand \@ifxundefined [1]{%
 \@ifx{#1\undefined}
}%
\providecommand \@ifnum [1]{%
 \ifnum #1\expandafter \@firstoftwo
 \else \expandafter \@secondoftwo
 \fi
}%
\providecommand \@ifx [1]{%
 \ifx #1\expandafter \@firstoftwo
 \else \expandafter \@secondoftwo
 \fi
}%
\providecommand \natexlab [1]{#1}%
\providecommand \enquote  [1]{``#1''}%
\providecommand \bibnamefont  [1]{#1}%
\providecommand \bibfnamefont [1]{#1}%
\providecommand \citenamefont [1]{#1}%
\providecommand \href@noop [0]{\@secondoftwo}%
\providecommand \href [0]{\begingroup \@sanitize@url \@href}%
\providecommand \@href[1]{\@@startlink{#1}\@@href}%
\providecommand \@@href[1]{\endgroup#1\@@endlink}%
\providecommand \@sanitize@url [0]{\catcode `\\12\catcode `\$12\catcode `\&12\catcode `\#12\catcode `\^12\catcode `\_12\catcode `\%12\relax}%
\providecommand \@@startlink[1]{}%
\providecommand \@@endlink[0]{}%
\providecommand \url  [0]{\begingroup\@sanitize@url \@url }%
\providecommand \@url [1]{\endgroup\@href {#1}{\urlprefix }}%
\providecommand \urlprefix  [0]{URL }%
\providecommand \Eprint [0]{\href }%
\providecommand \doibase [0]{https://doi.org/}%
\providecommand \selectlanguage [0]{\@gobble}%
\providecommand \bibinfo  [0]{\@secondoftwo}%
\providecommand \bibfield  [0]{\@secondoftwo}%
\providecommand \translation [1]{[#1]}%
\providecommand \BibitemOpen [0]{}%
\providecommand \bibitemStop [0]{}%
\providecommand \bibitemNoStop [0]{.\EOS\space}%
\providecommand \EOS [0]{\spacefactor3000\relax}%
\providecommand \BibitemShut  [1]{\csname bibitem#1\endcsname}%
\let\auto@bib@innerbib\@empty
\bibitem [{\citenamefont {Antoulas}(2005)}]{antoulas2005approximation}%
  \BibitemOpen
  \bibfield  {author} {\bibinfo {author} {\bibfnamefont {A.~C.}\ \bibnamefont {Antoulas}},\ }\href@noop {} {\emph {\bibinfo {title} {Approximation of Large-Scale Dynamical Systems}}}\ (\bibinfo  {publisher} {SIAM},\ \bibinfo {address} {Philadelphia, PA},\ \bibinfo {year} {2005})\BibitemShut {NoStop}%
\bibitem [{\citenamefont {Gajic}\ and\ \citenamefont {Qureshi}(2008)}]{gajicLyapunovMatrixEquation2008}%
  \BibitemOpen
  \bibfield  {author} {\bibinfo {author} {\bibfnamefont {Z.}~\bibnamefont {Gajic}}\ and\ \bibinfo {author} {\bibfnamefont {M.~T.~J.}\ \bibnamefont {Qureshi}},\ }\href@noop {} {\emph {\bibinfo {title} {Lyapunov {{Matrix Equation}} in {{System Stability}} and {{Control}}}}}\ (\bibinfo  {publisher} {Dover Publications},\ \bibinfo {address} {Mineola, N.Y.},\ \bibinfo {year} {2008})\BibitemShut {NoStop}%
\bibitem [{\citenamefont {Bravyi}\ and\ \citenamefont {K{\" o}nig}(2012)}]{bravyiClassicalSimulationDissipative2011}%
  \BibitemOpen
  \bibfield  {author} {\bibinfo {author} {\bibfnamefont {S.}~\bibnamefont {Bravyi}}\ and\ \bibinfo {author} {\bibfnamefont {R.}~\bibnamefont {K{\" o}nig}},\ }\bibfield  {title} {\bibinfo {title} {Classical simulation of dissipative fermionic linear optics},\ }\href@noop {} {\bibfield  {journal} {\bibinfo  {journal} {Quantum Info. Comput.}\ }\textbf {\bibinfo {volume} {12}},\ \bibinfo {pages} {925} (\bibinfo {year} {2012})},\ \Eprint {https://arxiv.org/abs/1112.2184} {arXiv:1112.2184 [quant-ph]} \BibitemShut {NoStop}%
\bibitem [{\citenamefont {Purkayastha}(2022)}]{purkayasthaLyapunovEquationOpen2022}%
  \BibitemOpen
  \bibfield  {author} {\bibinfo {author} {\bibfnamefont {A.}~\bibnamefont {Purkayastha}},\ }\bibfield  {title} {\bibinfo {title} {Lyapunov equation in open quantum systems and non-{{Hermitian}} physics},\ }\href {https://doi.org/10.1103/PhysRevA.105.062204} {\bibfield  {journal} {\bibinfo  {journal} {Physical Review A}\ }\textbf {\bibinfo {volume} {105}},\ \bibinfo {pages} {062204} (\bibinfo {year} {2022})},\ \Eprint {https://arxiv.org/abs/2201.00677} {arXiv:2201.00677} \BibitemShut {NoStop}%
\bibitem [{\citenamefont {Del~Moral}\ and\ \citenamefont {Niclas}(2018)}]{Delmoral_2018}%
  \BibitemOpen
  \bibfield  {author} {\bibinfo {author} {\bibfnamefont {P.}~\bibnamefont {Del~Moral}}\ and\ \bibinfo {author} {\bibfnamefont {A.}~\bibnamefont {Niclas}},\ }\bibfield  {title} {\bibinfo {title} {{A Taylor expansion of the square root matrix functional}},\ }\href {https://doi.org/https://doi.org/10.1016/j.jmaa.2018.05.005} {\bibfield  {journal} {\bibinfo  {journal} {Journal of Mathematical Analysis and Applications}\ }\textbf {\bibinfo {volume} {465}},\ \bibinfo {pages} {259} (\bibinfo {year} {2018})},\ \Eprint {https://arxiv.org/abs/1705.08561} {arXiv:1705.08561 [math.NA]} \BibitemShut {NoStop}%
\bibitem [{\citenamefont {Paris}(2009)}]{Paris_2009}%
  \BibitemOpen
  \bibfield  {author} {\bibinfo {author} {\bibfnamefont {M.~G.~A.}\ \bibnamefont {Paris}},\ }\bibfield  {title} {\bibinfo {title} {Quantum estimation for quantum technology},\ }\href {https://doi.org/10.1142/S0219749909004839} {\bibfield  {journal} {\bibinfo  {journal} {Int. J. Quant. Inf.}\ }\textbf {\bibinfo {volume} {7}},\ \bibinfo {pages} {125} (\bibinfo {year} {2009})},\ \Eprint {https://arxiv.org/abs/0804.2981} {arXiv:0804.2981 [quant-ph]} \BibitemShut {NoStop}%
\bibitem [{\citenamefont {W{\'o}jtowicz}\ \emph {et~al.}(2025)\citenamefont {W{\'o}jtowicz}, \citenamefont {Huelga}, \citenamefont {Rams},\ and\ \citenamefont {Plenio}}]{wojtowiczQuantumFisherInformation2025a}%
  \BibitemOpen
  \bibfield  {author} {\bibinfo {author} {\bibfnamefont {G.}~\bibnamefont {W{\'o}jtowicz}}, \bibinfo {author} {\bibfnamefont {S.~F.}\ \bibnamefont {Huelga}}, \bibinfo {author} {\bibfnamefont {M.~M.}\ \bibnamefont {Rams}},\ and\ \bibinfo {author} {\bibfnamefont {M.~B.}\ \bibnamefont {Plenio}},\ }\bibfield  {title} {\bibinfo {title} {Quantum {{Fisher}} information from tensor-network integration of the {{Lyapunov}} equation},\ }\href {https://doi.org/10.1103/xjln-7ddy} {\bibfield  {journal} {\bibinfo  {journal} {Physical Review A}\ }\textbf {\bibinfo {volume} {112}},\ \bibinfo {pages} {052454} (\bibinfo {year} {2025})},\ \Eprint {https://arxiv.org/abs/2506.11330} {arXiv:2506.11330} \BibitemShut {NoStop}%
\bibitem [{\citenamefont {Bartels}\ and\ \citenamefont {Stewart}(1972)}]{bartelsAlgorithm432C21972}%
  \BibitemOpen
  \bibfield  {author} {\bibinfo {author} {\bibfnamefont {R.~H.}\ \bibnamefont {Bartels}}\ and\ \bibinfo {author} {\bibfnamefont {G.~W.}\ \bibnamefont {Stewart}},\ }\bibfield  {title} {\bibinfo {title} {Algorithm 432 [{{C2}}]: {{Solution}} of the matrix equation {{AX}} + {{XB}} = {{C}} [{{F4}}]},\ }\href {https://doi.org/10.1145/361573.361582} {\bibfield  {journal} {\bibinfo  {journal} {Commun. ACM}\ }\textbf {\bibinfo {volume} {15}},\ \bibinfo {pages} {820} (\bibinfo {year} {1972})}\BibitemShut {NoStop}%
\bibitem [{\citenamefont {Boyd}\ \emph {et~al.}(1994)\citenamefont {Boyd}, \citenamefont {El~Ghaoui}, \citenamefont {Feron},\ and\ \citenamefont {Balakrishnan}}]{boydLinearMatrixInequalities1994}%
  \BibitemOpen
  \bibfield  {author} {\bibinfo {author} {\bibfnamefont {S.~P.}\ \bibnamefont {Boyd}}, \bibinfo {author} {\bibfnamefont {L.}~\bibnamefont {El~Ghaoui}}, \bibinfo {author} {\bibfnamefont {E.}~\bibnamefont {Feron}},\ and\ \bibinfo {author} {\bibfnamefont {V.}~\bibnamefont {Balakrishnan}},\ }\href@noop {} {\emph {\bibinfo {title} {Linear Matrix Inequalities in System and Control Theory}}},\ \bibinfo {series} {{{SIAM}} Studies in Applied Mathematics}\ No.~\bibinfo {number} {15}\ (\bibinfo  {publisher} {{SIAM, Society for Industrial and Applied Mathematics}},\ \bibinfo {address} {Philadelphia, PA},\ \bibinfo {year} {1994})\BibitemShut {NoStop}%
\bibitem [{\citenamefont {Simoncini}(2016)}]{simonciniComputationalMethodsLinear2016}%
  \BibitemOpen
  \bibfield  {author} {\bibinfo {author} {\bibfnamefont {V.}~\bibnamefont {Simoncini}},\ }\bibfield  {title} {\bibinfo {title} {Computational {{Methods}} for {{Linear Matrix Equations}}},\ }\href {https://doi.org/10.1137/130912839} {\bibfield  {journal} {\bibinfo  {journal} {SIAM Review}\ }\textbf {\bibinfo {volume} {58}},\ \bibinfo {pages} {377} (\bibinfo {year} {2016})}\BibitemShut {NoStop}%
\bibitem [{\citenamefont {Sun}\ and\ \citenamefont {Zhang}(2017)}]{sunSolvingLyapunovEquation2017}%
  \BibitemOpen
  \bibfield  {author} {\bibinfo {author} {\bibfnamefont {H.}~\bibnamefont {Sun}}\ and\ \bibinfo {author} {\bibfnamefont {J.}~\bibnamefont {Zhang}},\ }\bibfield  {title} {\bibinfo {title} {Solving {{Lyapunov}} equation by quantum algorithm},\ }\href {https://doi.org/10.1007/s11768-017-7091-0} {\bibfield  {journal} {\bibinfo  {journal} {Control Theory and Technology}\ }\textbf {\bibinfo {volume} {15}},\ \bibinfo {pages} {267} (\bibinfo {year} {2017})}\BibitemShut {NoStop}%
\bibitem [{\citenamefont {Costa}\ \emph {et~al.}(2022)\citenamefont {Costa}, \citenamefont {An}, \citenamefont {Sanders}, \citenamefont {Su}, \citenamefont {Babbush},\ and\ \citenamefont {Berry}}]{costaOptimalScalingQuantum2022}%
  \BibitemOpen
  \bibfield  {author} {\bibinfo {author} {\bibfnamefont {P.~C.}\ \bibnamefont {Costa}}, \bibinfo {author} {\bibfnamefont {D.}~\bibnamefont {An}}, \bibinfo {author} {\bibfnamefont {Y.~R.}\ \bibnamefont {Sanders}}, \bibinfo {author} {\bibfnamefont {Y.}~\bibnamefont {Su}}, \bibinfo {author} {\bibfnamefont {R.}~\bibnamefont {Babbush}},\ and\ \bibinfo {author} {\bibfnamefont {D.~W.}\ \bibnamefont {Berry}},\ }\bibfield  {title} {\bibinfo {title} {Optimal {{Scaling Quantum Linear-Systems Solver}} via {{Discrete Adiabatic Theorem}}},\ }\href {https://doi.org/10.1103/PRXQuantum.3.040303} {\bibfield  {journal} {\bibinfo  {journal} {PRX Quantum}\ }\textbf {\bibinfo {volume} {3}},\ \bibinfo {pages} {040303} (\bibinfo {year} {2022})}\BibitemShut {NoStop}%
\bibitem [{\citenamefont {Dalzell}(2024)}]{dalzellShortcutOptimalQuantum2024}%
  \BibitemOpen
  \bibfield  {author} {\bibinfo {author} {\bibfnamefont {A.~M.}\ \bibnamefont {Dalzell}},\ }\href@noop {} {\bibinfo {title} {A shortcut to an optimal quantum linear system solver}} (\bibinfo {year} {2024}),\ \Eprint {https://arxiv.org/abs/2406.12086} {arXiv:2406.12086 [quant-ph]} \BibitemShut {NoStop}%
\bibitem [{\citenamefont {Morales}\ \emph {et~al.}(2025)\citenamefont {Morales}, \citenamefont {Pira}, \citenamefont {Schleich}, \citenamefont {Koor}, \citenamefont {Costa}, \citenamefont {An}, \citenamefont {Aspuru-Guzik}, \citenamefont {Lin}, \citenamefont {Rebentrost},\ and\ \citenamefont {Berry}}]{Morales_2025}%
  \BibitemOpen
  \bibfield  {author} {\bibinfo {author} {\bibfnamefont {M.~E.~S.}\ \bibnamefont {Morales}}, \bibinfo {author} {\bibfnamefont {L.}~\bibnamefont {Pira}}, \bibinfo {author} {\bibfnamefont {P.}~\bibnamefont {Schleich}}, \bibinfo {author} {\bibfnamefont {K.}~\bibnamefont {Koor}}, \bibinfo {author} {\bibfnamefont {P.~C.~S.}\ \bibnamefont {Costa}}, \bibinfo {author} {\bibfnamefont {D.}~\bibnamefont {An}}, \bibinfo {author} {\bibfnamefont {A.}~\bibnamefont {Aspuru-Guzik}}, \bibinfo {author} {\bibfnamefont {L.}~\bibnamefont {Lin}}, \bibinfo {author} {\bibfnamefont {P.}~\bibnamefont {Rebentrost}},\ and\ \bibinfo {author} {\bibfnamefont {D.~W.}\ \bibnamefont {Berry}},\ }\href@noop {} {\bibinfo {title} {Quantum linear system solvers: A survey of algorithms and applications}} (\bibinfo {year} {2025}),\ \Eprint {https://arxiv.org/abs/2411.02522} {arXiv:2411.02522 [quant-ph]} \BibitemShut {NoStop}%
\bibitem [{\citenamefont {Clayton}\ \emph {et~al.}(2024)\citenamefont {Clayton}, \citenamefont {Leng}, \citenamefont {Yang}, \citenamefont {Qiao}, \citenamefont {Lin},\ and\ \citenamefont {Wu}}]{claytonDifferentiableQuantumComputing2024}%
  \BibitemOpen
  \bibfield  {author} {\bibinfo {author} {\bibfnamefont {C.}~\bibnamefont {Clayton}}, \bibinfo {author} {\bibfnamefont {J.}~\bibnamefont {Leng}}, \bibinfo {author} {\bibfnamefont {G.}~\bibnamefont {Yang}}, \bibinfo {author} {\bibfnamefont {Y.-L.}\ \bibnamefont {Qiao}}, \bibinfo {author} {\bibfnamefont {M.~C.}\ \bibnamefont {Lin}},\ and\ \bibinfo {author} {\bibfnamefont {X.}~\bibnamefont {Wu}},\ }\href {https://doi.org/10.48550/arXiv.2411.01391} {\bibinfo {title} {Differentiable {{Quantum Computing}} for {{Large-scale Linear Control}}}} (\bibinfo {year} {2024}),\ \Eprint {https://arxiv.org/abs/2411.01391} {arXiv:2411.01391} \BibitemShut {NoStop}%
\bibitem [{\citenamefont {Liu}\ \emph {et~al.}(2025)\citenamefont {Liu}, \citenamefont {Wang}, \citenamefont {Wilde},\ and\ \citenamefont {Zhang}}]{Liu_2025}%
  \BibitemOpen
  \bibfield  {author} {\bibinfo {author} {\bibfnamefont {N.}~\bibnamefont {Liu}}, \bibinfo {author} {\bibfnamefont {Q.}~\bibnamefont {Wang}}, \bibinfo {author} {\bibfnamefont {M.~M.}\ \bibnamefont {Wilde}},\ and\ \bibinfo {author} {\bibfnamefont {Z.}~\bibnamefont {Zhang}},\ }\bibfield  {title} {\bibinfo {title} {Quantum algorithms for matrix geometric means},\ }\href {https://doi.org/10.1038/s41534-025-00973-7} {\bibfield  {journal} {\bibinfo  {journal} {npj Quantum Information}\ }\textbf {\bibinfo {volume} {11}},\ \bibinfo {pages} {101} (\bibinfo {year} {2025})}\BibitemShut {NoStop}%
\bibitem [{\citenamefont {Somma}\ \emph {et~al.}(2025)\citenamefont {Somma}, \citenamefont {Low}, \citenamefont {Berry},\ and\ \citenamefont {Babbush}}]{Somma_2025}%
  \BibitemOpen
  \bibfield  {author} {\bibinfo {author} {\bibfnamefont {R.~D.}\ \bibnamefont {Somma}}, \bibinfo {author} {\bibfnamefont {G.~H.}\ \bibnamefont {Low}}, \bibinfo {author} {\bibfnamefont {D.~W.}\ \bibnamefont {Berry}},\ and\ \bibinfo {author} {\bibfnamefont {R.}~\bibnamefont {Babbush}},\ }\href@noop {} {\bibinfo {title} {Quantum algorithm for linear matrix equations}} (\bibinfo {year} {2025}),\ \Eprint {https://arxiv.org/abs/2508.02822} {arXiv:2508.02822 [quant-ph]} \BibitemShut {NoStop}%
\bibitem [{\citenamefont {Zhang}\ \emph {et~al.}(2023)\citenamefont {Zhang}, \citenamefont {Bosse},\ and\ \citenamefont {Cubitt}}]{zhang2023dissipative}%
  \BibitemOpen
  \bibfield  {author} {\bibinfo {author} {\bibfnamefont {D.}~\bibnamefont {Zhang}}, \bibinfo {author} {\bibfnamefont {J.~L.}\ \bibnamefont {Bosse}},\ and\ \bibinfo {author} {\bibfnamefont {T.}~\bibnamefont {Cubitt}},\ }\href@noop {} {\bibinfo {title} {{Dissipative Quantum Gibbs Sampling}}} (\bibinfo {year} {2023}),\ \Eprint {https://arxiv.org/abs/2304.04526} {arXiv:2304.04526 [quant-ph]} \BibitemShut {NoStop}%
\bibitem [{\citenamefont {Costa}\ \emph {et~al.}(2025)\citenamefont {Costa}, \citenamefont {An}, \citenamefont {Babbush},\ and\ \citenamefont {Berry}}]{costaDiscreteAdiabaticQuantum2025}%
  \BibitemOpen
  \bibfield  {author} {\bibinfo {author} {\bibfnamefont {P.~C.~S.}\ \bibnamefont {Costa}}, \bibinfo {author} {\bibfnamefont {D.}~\bibnamefont {An}}, \bibinfo {author} {\bibfnamefont {R.}~\bibnamefont {Babbush}},\ and\ \bibinfo {author} {\bibfnamefont {D.}~\bibnamefont {Berry}},\ }\bibfield  {title} {\bibinfo {title} {The discrete adiabatic quantum linear system solver has lower constant factors than the randomized adiabatic solver},\ }\href {https://doi.org/10.22331/q-2025-10-20-1887} {\bibfield  {journal} {\bibinfo  {journal} {Quantum}\ }\textbf {\bibinfo {volume} {9}},\ \bibinfo {pages} {1887} (\bibinfo {year} {2025})}\BibitemShut {NoStop}%
\bibitem [{\citenamefont {Orsucci}\ and\ \citenamefont {Dunjko}(2021)}]{orsucciSolvingClassesPositivedefinite2021}%
  \BibitemOpen
  \bibfield  {author} {\bibinfo {author} {\bibfnamefont {D.}~\bibnamefont {Orsucci}}\ and\ \bibinfo {author} {\bibfnamefont {V.}~\bibnamefont {Dunjko}},\ }\bibfield  {title} {\bibinfo {title} {On solving classes of positive-definite quantum linear systems with quadratically improved runtime in the condition number},\ }\href {https://doi.org/10.22331/q-2021-11-08-573} {\bibfield  {journal} {\bibinfo  {journal} {Quantum}\ }\textbf {\bibinfo {volume} {5}},\ \bibinfo {pages} {573} (\bibinfo {year} {2021})}\BibitemShut {NoStop}%
\bibitem [{\citenamefont {Mori}\ \emph {et~al.}(2026)\citenamefont {Mori}, \citenamefont {Kikuchi}, \citenamefont {Benedetti},\ and\ \citenamefont {Rosenkranz}}]{mori2026sparsity}%
  \BibitemOpen
  \bibfield  {author} {\bibinfo {author} {\bibfnamefont {H.}~\bibnamefont {Mori}}, \bibinfo {author} {\bibfnamefont {Y.}~\bibnamefont {Kikuchi}}, \bibinfo {author} {\bibfnamefont {M.}~\bibnamefont {Benedetti}},\ and\ \bibinfo {author} {\bibfnamefont {M.}~\bibnamefont {Rosenkranz}},\ }\href@noop {} {\bibinfo {title} {Sparsity-dependent complexity lower bound of quantum linear system solvers}} (\bibinfo {year} {2026}),\ \Eprint {https://arxiv.org/abs/2601.16697} {arXiv:2601.16697 [quant-ph]} \BibitemShut {NoStop}%
\bibitem [{\citenamefont {Harrow}\ \emph {et~al.}(2009)\citenamefont {Harrow}, \citenamefont {Hassidim},\ and\ \citenamefont {Lloyd}}]{Harrow_2009}%
  \BibitemOpen
  \bibfield  {author} {\bibinfo {author} {\bibfnamefont {A.~W.}\ \bibnamefont {Harrow}}, \bibinfo {author} {\bibfnamefont {A.}~\bibnamefont {Hassidim}},\ and\ \bibinfo {author} {\bibfnamefont {S.}~\bibnamefont {Lloyd}},\ }\bibfield  {title} {\bibinfo {title} {Quantum algorithm for linear systems of equations},\ }\href {https://doi.org/10.1103/PhysRevLett.103.150502} {\bibfield  {journal} {\bibinfo  {journal} {Phys. Rev. Lett.}\ }\textbf {\bibinfo {volume} {103}},\ \bibinfo {pages} {150502} (\bibinfo {year} {2009})},\ \Eprint {https://arxiv.org/abs/0811.3171} {arXiv:0811.3171} \BibitemShut {NoStop}%
\bibitem [{\citenamefont {Low}\ and\ \citenamefont {Chuang}(2019)}]{Low_2019}%
  \BibitemOpen
  \bibfield  {author} {\bibinfo {author} {\bibfnamefont {G.~H.}\ \bibnamefont {Low}}\ and\ \bibinfo {author} {\bibfnamefont {I.~L.}\ \bibnamefont {Chuang}},\ }\bibfield  {title} {\bibinfo {title} {Hamiltonian simulation by qubitization},\ }\href {https://doi.org/10.22331/q-2019-07-12-163} {\bibfield  {journal} {\bibinfo  {journal} {Quantum}\ }\textbf {\bibinfo {volume} {3}},\ \bibinfo {pages} {163} (\bibinfo {year} {2019})}\BibitemShut {NoStop}%
\bibitem [{\citenamefont {Gily{\'e}n}\ \emph {et~al.}(2018)\citenamefont {Gily{\'e}n}, \citenamefont {Su}, \citenamefont {Low},\ and\ \citenamefont {Wiebe}}]{gilyenQuantumSingularValue2019}%
  \BibitemOpen
  \bibfield  {author} {\bibinfo {author} {\bibfnamefont {A.}~\bibnamefont {Gily{\'e}n}}, \bibinfo {author} {\bibfnamefont {Y.}~\bibnamefont {Su}}, \bibinfo {author} {\bibfnamefont {G.~H.}\ \bibnamefont {Low}},\ and\ \bibinfo {author} {\bibfnamefont {N.}~\bibnamefont {Wiebe}},\ }\href@noop {} {\bibinfo {title} {Quantum singular value transformation and beyond: exponential improvements for quantum matrix arithmetics}} (\bibinfo {year} {2018}),\ \Eprint {https://arxiv.org/abs/1806.01838} {arXiv:1806.01838 [quant-ph]} \BibitemShut {NoStop}%
\bibitem [{Note1()}]{Note1}%
  \BibitemOpen
  \bibinfo {note} {Here we use that $1/\ln (1/\protect \opnorm {A'}) = 1/\ln (1/\protect \sqrt {1 - 1/\kappa }) < 2 \kappa $ for $\kappa >1$.}\BibitemShut {Stop}%
\bibitem [{\citenamefont {Sinap}\ and\ \citenamefont {Van~Assche}(1994)}]{sinapPolynomialInterpolationGaussian1994}%
  \BibitemOpen
  \bibfield  {author} {\bibinfo {author} {\bibfnamefont {A.}~\bibnamefont {Sinap}}\ and\ \bibinfo {author} {\bibfnamefont {W.}~\bibnamefont {Van~Assche}},\ }\bibfield  {title} {\bibinfo {title} {Polynomial interpolation and {{Gaussian}} quadrature for matrix-valued functions},\ }\href {https://doi.org/10.1016/0024-3795(94)90005-1} {\bibfield  {journal} {\bibinfo  {journal} {Linear Algebra and its Applications}\ }\textbf {\bibinfo {volume} {207}},\ \bibinfo {pages} {71} (\bibinfo {year} {1994})}\BibitemShut {NoStop}%
\bibitem [{\citenamefont {Horn}\ and\ \citenamefont {Johnson}(2017)}]{hornMatrixAnalysis2017}%
  \BibitemOpen
  \bibfield  {author} {\bibinfo {author} {\bibfnamefont {R.~A.}\ \bibnamefont {Horn}}\ and\ \bibinfo {author} {\bibfnamefont {C.~R.}\ \bibnamefont {Johnson}},\ }\href@noop {} {\emph {\bibinfo {title} {Matrix Analysis}}},\ \bibinfo {edition} {2nd}\ ed.\ (\bibinfo  {publisher} {Cambridge University Press},\ \bibinfo {address} {New York City, NY},\ \bibinfo {year} {2017})\BibitemShut {NoStop}%
\end{thebibliography}%

\end{document}